\newtheorem{theorem}{Theorem}[section]
\newtheorem{definition}{Definition}[section]
\newtheorem{lemma}[theorem]{Lemma}
\newtheorem{example}[theorem]{Example}
\newtheorem{proposition}[theorem]{Proposition}
\newtheorem{remark}{Remark}[section]
\begin{document}

\title{Quantum Block and Synchronizable Codes Derived from Certain Classes of Polynomials}

\author{Tao Zhang and Gennian Ge
\thanks{The research of G. Ge was supported by the National Natural Science Foundation of China under Grant No.~61171198 and  Grant No.~11431003, the Importation and Development of High-Caliber Talents Project of Beijing Municipal Institutions, and Zhejiang Provincial Natural Science Foundation of China under Grant No.~LZ13A010001.}
\thanks{T. Zhang is with  the School of Mathematical Sciences, Capital Normal University,
Beijing 100048, China. He is also with the Department of Mathematics, Zhejiang University,
Hangzhou 310027,  China (e-mail: tzh@zju.edu.cn).}
\thanks{G. Ge is with  the School of Mathematical Sciences, Capital Normal University,
Beijing 100048, China. He is also with Beijing Center for Mathematics and Information Interdisciplinary Sciences, Beijing, 100048, China (e-mail: gnge@zju.edu.cn).}
}

\maketitle

\begin{abstract}
One central theme in quantum error-correction is to construct quantum codes that have a large minimum distance. In this paper, we first present a construction of classical codes based on certain class of polynomials. Through these classical codes, we are able to obtain some new quantum codes. It turns out that some of quantum codes exhibited here have better parameters than the ones available in the literature. Meanwhile, we give a new class of quantum synchronizable codes with highest possible tolerance against misalignment from duadic codes.
\end{abstract}

\begin{keywords}
Quantum codes, quantum synchronizable codes, polynomial codes, cyclotomic cosets, duadic codes.
\end{keywords}
\section{Introduction}
Quantum codes were introduced to protect quantum information from decoherence during quantum computations and quantum communications. After the initial work for quantum error-correcting codes \cite{S95,S96}, researchers have made great progress in developing quantum codes. In \cite{CRSS98}, the construction of binary quantum codes was diverted into finding classical self-orthogonal codes over $\textup{GF}(4)$, and then generalized to the nonbinary case \cite{R99,AK01}. After the establishment of the above connection between quantum codes and classical codes, the construction of quantum codes can be converted to that of classical self-orthogonal codes (see \cite{AKS07,CLX05,DJX12,FX08,JX14,G14,LXW08,L04,S99} and the references therein). As in classical coding theory, one of the central tasks in quantum coding theory is to construct quantum code with a large minimum distance.

 We know that Reed-Solomon code is an MDS (maximum distance separable) code \cite{HP03}, as its parameters meet the Singleton bound. But the length of a $q$-ary Reed-Solomon code is restricted to be less than or equal to $q$. By using the principal of evaluation, several successful constructions have been made to construct good $q$-ary linear codes which have lengths larger than $q$ (see \cite{XL00,LNX01,JX12,SH14}). Motivated by these works, we will give a new construction of linear codes by evaluation. Using these linear codes, we obtain new quantum codes. Some of them have better parameters than the quantum codes listed in tables online \cite{E12}, \cite{G07}.

Recently, Fujiwara \cite{F13} considered another type of quantum codes, named quantum synchronizable codes, which is a coding scheme that corrects not only the general quantum noise represented by Pauli errors but also misalignment errors. Misalignment is the type of errors that the information processing device misidentifies the boundaries of an information block. For example, assume that each chunk of information is encoded into a block of consecutive three bits in a stream of bits $b_{i}$ so that the data has a frame structure. If three blocks of information are encoded, we have nine ordered bits $(b_{0},b_{1},b_{2},b_{3},b_{4},b_{5},b_{6},b_{7},b_{8})$ in which each of the three blocks $(b_{0},b_{1},b_{2})$, $(b_{3},b_{4},b_{5})$, $(b_{6},b_{7},b_{8})$ forms an information chunk. If misalignment occurs to the right by one bit when attempting to retrieve the second block of information, the device will wrongly read out $b_{4}$, $b_{5}$ and $b_{6}$ instead of the correct set of bits $b_{3}$, $b_{4}$ and $b_{5}$. The same kind of error in block synchronization may be considered for a stream of qubits.

A theoretical framework of quantum synchronizable coding was first introduced in \cite{F13}. Later, the authors of \cite{FTW13} improved the original construction method and gave more examples of quantum synchronizable codes. While we have a theoretical framework of quantum synchronizable coding, there are only a few infinite families of quantum synchronizable codes in the literature. One of the main challenges in the construction of quantum synchronizable codes is that it is quite difficult to find suitable classical codes because the required algebraic structures are very harsh. Particular constraint is the variety of available code parameters. For instance, the lengths of the encoded information blocks of the known quantum synchronizable codes that have highest possible tolerance against misalignment are primes, $2^{m}-1$ or of the form $\frac{2^{hm}-1}{2^{h}-1}$ \cite{FTW13,FV14,XYF14}. In this work, we construct a new infinite family of quantum synchronizable codes by exploiting duadic codes over finite field $\mathbb{F}_{2}$.

This paper is organized as follows. In Section~\ref{pre} we recall the basics about quantum codes and quantum synchronizable codes. In Section~\ref{quantumcodes}, we first give a new construction of classical linear codes. From these linear codes, we obtain some new quantum codes. In Section~\ref{quantumsyn}, we present a new class of quantum synchronizable codes based on duadic codes.

\section{Preliminaries}\label{pre}
Let $\mathbb{F}_{q}$ be the finite field with $q$ elements, where $q$ is a prime power. A linear $[n,k]$ code $C$ over $\mathbb{F}_{q}$ is a $k$-dimensional subspace of $\mathbb{F}_{q}^{n}$. The weight $\textup{wt}(x)$ of a codeword $x\in C$ is the number of nonzero components of $x$. The distance of two codewords $x,y\in C$ is $d(x,y):=\textup{wt}(x-y)$. The minimum distance $d$ between any two distinct codewords of $C$ is called the minimum distance of $C$. An $[n,k,d]$ code is an $[n,k]$ code with the minimum distance $d$.

A linear $[n,k]$ code $C$ over the finite field $\mathbb{F}_{q}$ is called cyclic if $(c_{0},c_{1},\cdots,c_{n-1})\in C$ implies $(c_{n-1},c_{0},c_{1},\cdots,c_{n-2})\in C$. If we identify each codeword $(c_{0},c_{1},\cdots,c_{n-1})$ with $c_{0}+c_{1}x+c_{2}x^{2}+\cdots+c_{n-1}x^{n-1}\in\mathbb{F}_{q}[x]/(x^{n}-1)$, then cyclic code $C$ is identified with an ideal of the ring $\mathbb{F}_{q}[x]/(x^{n}-1)$. Note that every ideal of $\mathbb{F}_{q}[x]/(x^{n}-1)$ is principal, then $C$ is generated by a monic divisor $g(x)$ of $x^{n}-1$. In this case, $g(x)$ is called the generator polynomial of $C$ and we write $C=\langle g(x)\rangle$. Let $\alpha$ be a primitive $n$-th root of unity in some extension field of $\mathbb{F}_{q}$, then the set $Z=\{i|g(\alpha^{i})=0,\ 0\leq i\leq n-1\}$ is called the defining set of $C$.

Given two vectors $x=(x_{0},x_{1},\cdots,x_{n-1}),\ y=(c_{0},c_{1},\cdots,c_{n-1})\in\mathbb{F}_{q}^{n}$, there are two inner products we are interested in. One is the Euclidean inner product which is defined as
$$\langle x,y\rangle_{E}=x_{0}y_{0}+x_{1}y_{1}+\cdots+x_{n-1}y_{n-1}.$$
When $q=l^{2}$, where $l$ is a prime power, then we can also consider the Hermitian inner product which is defined by
$$\langle x,y\rangle_{H}=x_{0}y_{0}^{l}+x_{1}y_{1}^{l}+\cdots+x_{n-1}y_{n-1}^{l}.$$
The Euclidean dual code of $C$ is defined as
$$C^{\bot E}=\{x\in\mathbb{F}_{q}^{n}|\langle x,y\rangle_{E}=0\textup{ for all }y\in C\}.$$
Similarly the Hermitian dual code of $C$ is defined as
$$C^{\bot H}=\{x\in\mathbb{F}_{q}^{n}|\langle x,y\rangle_{H}=0\textup{ for all }y\in C\}.$$
A linear code $C$ is called Euclidean (Hermitian) dual-containing if $C^{\bot E}\subseteq C$ ($C^{\bot H}\subseteq C$, respectively). A code $D$ is called $C$-containing if $C\subseteq D$.
\subsection{Quantum Codes}
In this subsection, we recall the basics of quantum codes. Let $q$ be a power of a prime number $p$. A qubit $|v\rangle$ is a nonzero vector in $\mathbb{C}^{q}$: $|v\rangle=\sum_{x\in\mathbb{F}_{q}}c_{x}|x\rangle$, where $\{|x\rangle|x\in\mathbb{F}_{q}\}$ is a basis of $\mathbb{C}^{q}$. For $n\geq1$, the $n$-th tensor product $(\mathbb{C}^{q})^{\bigotimes n}\cong\mathbb{C}^{q^{n}}$ has a basis $\{|a_{1}\cdots a_{n}\rangle=|a_{1}\rangle\bigotimes\cdots\bigotimes|a_{n}\rangle|(a_{1},\cdots,a_{n})\in\mathbb{F}_{q}^{n}\},$ then an $n$-qubit is a nonzero vector in $\mathbb{C}^{q^{n}}$: $|v\rangle=\sum_{a\in\mathbb{F}_{q}^{n}}c_{a}|a\rangle,$ where $c_{a}\in \mathbb{C}.$

Let $\zeta_{p}$ be a primitive $p$-th root of unity. The quantum errors in $q$-ary quantum system are linear operators acting on $\mathbb{C}^{q}$ and can be represented by the set of error bases: $\varepsilon_{n}=\{T^{a}R^{b}|a,b\in\mathbb{F}_{q}\}$, where $T^{a}R^{b}$ is defined by
$$T^{a}R^{b}|x\rangle=\zeta_{p}^{\textup{Tr}_{\mathbb{F}_{q}/\mathbb{F}_{p}}(bx)}|x+a\rangle.$$
The set
$$E_{n}=\{\zeta_{p}^{l}T^{a}R^{b}|0\leq l\leq p-1,a=(a_{1},\cdots,a_{n}),b=(b_{1},\cdots,b_{n})\in\mathbb{F}_{q}^{n}\}$$
forms an error group, where $\zeta_{p}^{l}T^{a}R^{b}$ is defined by
$$\zeta_{p}^{l}T^{a}R^{b}|x\rangle=\zeta_{p}^{l}T^{a_{1}}R^{b_{1}}|x_{1}\rangle\bigotimes\cdots\bigotimes T^{a_{n}}R^{b_{n}}|x_{n}\rangle=\zeta_{p}^{l+\textup{Tr}_{\mathbb{F}_{q}/\mathbb{F}_{p}}(bx)}|x+a\rangle,$$
for any $|x\rangle=|x_{1}\rangle\bigotimes\cdots\bigotimes|x_{n}\rangle$, $x=(x_{1},\cdots,x_{n})\in\mathbb{F}_{q}^{n}$. For an error $e=\zeta_{p}^{l}T^{a}R^{b}$, its quantum weight is defined by
$$w_{Q}(e)=\sharp\{1\leq i\leq n|(a_{i},b_{i})\neq(0,0)\}.$$

A subspace $Q$ of $\mathbb{C}^{q^{n}}$ is called a $q$-ary quantum code with length $n$. The $q$-ary quantum code has minimum distance $d$ if and only if it can detect all errors in $E_{n}$ of quantum weight less than $d$, but cannot detect some errors of weight $d$. A $q$-ary $[[n,k,d]]_{q}$ quantum code is a $q^{k}$-dimensional subspace of $\mathbb{C}^{q^{n}}$ with minimum distance $d$. There are many methods to construct quantum codes, and the following theorem is one of the most frequently used construction methods.
\begin{theorem}$($\rm{\cite{AK01}} Hermitian Construction$)$\label{thm1}
If $C$ is a $q^{2}$-ary Hermitian dual-containing $[n,k,d]$ code, then there exists a $q$-ary $[[n,2k-n,\geq d]]$-quantum code.
\end{theorem}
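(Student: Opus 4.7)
The plan is to reduce the Hermitian dual-containing condition over $\mathbb{F}_{q^{2}}$ to a self-orthogonality condition for a trace-symplectic form over $\mathbb{F}_{q}$, and then invoke the standard stabilizer-code construction for $q$-ary quantum codes.

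First, I would fix a basis $\{1,\beta\}$ of $\mathbb{F}_{q^{2}}$ as an $\mathbb{F}_{q}$-vector space and define the $\mathbb{F}_{q}$-linear bijection $\phi:\mathbb{F}_{q^{2}}^{n}\to\mathbb{F}_{q}^{2n}$ by writing each vector $v$ uniquely as $v=a+\beta b$ with $a,b\in\mathbb{F}_{q}^{n}$ and setting $\phi(v)=(a,b)$. Using $\textup{Tr}_{\mathbb{F}_{q^{2}}/\mathbb{F}_{q}}(z)=z+z^{q}$, one checks that the Hermitian inner product is related, after applying an $\mathbb{F}_{q}$-linear functional, to the canonical trace-symplectic form $\langle\cdot,\cdot\rangle_{s}$ on $\mathbb{F}_{q}^{2n}$; concretely $\langle x,y\rangle_{H}=0$ for all $y\in C$ if and only if $\phi(x)$ is symplectically orthogonal to $\phi(C)$. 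Thus $\phi$ sends the Hermitian dual $C^{\bot H}$ onto the symplectic dual $\phi(C)^{\bot_{s}}$ in $\mathbb{F}_{q}^{2n}$, and the hypothesis $C^{\bot H}\subseteq C$ becomes $\phi(C)^{\bot_{s}}\subseteq\phi(C)$.

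Next, I would apply the stabilizer construction: an additive subgroup $S\subseteq\mathbb{F}_{q}^{2n}$ with $S\subseteq S^{\bot_{s}}$ of $\mathbb{F}_{q}$-dimension $m$ yields a $q$-ary stabilizer code of dimension $q^{n-m}$ with minimum distance equal to the minimum symplectic weight of $S^{\bot_{s}}\setminus S$ (purity gives $\geq$). Taking $S=\phi(C^{\bot H})$, which has $\mathbb{F}_{q}$-dimension $2(n-k)$, the corresponding quantum code has dimension $q^{n-2(n-k)}=q^{2k-n}$, so its parameters are $[[n,2k-n,\geq d']]_{q}$. For the distance one verifies that $\phi$ carries the Hamming weight on $\mathbb{F}_{q^{2}}^{n}$ to the quantum (symplectic) weight on $\mathbb{F}_{q}^{2n}$, because a coordinate $(a_{i},b_{i})\in\mathbb{F}_{q}^{2}$ is nonzero precisely when $v_{i}=a_{i}+\beta b_{i}\in\mathbb{F}_{q^{2}}$ is nonzero; hence $d'\geq\min\{\textup{wt}(c):c\in C\setminus C^{\bot H}\}\geq d$.

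The main technical obstacle is the first step: one must select the trace-symplectic form carefully so that Hermitian orthogonality over $\mathbb{F}_{q^{2}}$ corresponds exactly to symplectic orthogonality over $\mathbb{F}_{q}$, and one must confirm that $\phi$ preserves weight as described. Once this dictionary is in place, the parameter count and the distance bound follow directly from the stabilizer formalism already developed in \cite{AK01}, completing the proof.
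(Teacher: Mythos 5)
The paper does not prove this statement---it is quoted directly from \cite{AK01}---so there is no internal proof to compare against; your proposal reconstructs the standard argument from that reference (expand $\mathbb{F}_{q^{2}}^{n}\to\mathbb{F}_{q}^{2n}$, show Hermitian dual-containment becomes trace-symplectic self-orthogonality, then apply the stabilizer construction), and both the dimension count ($\dim_{\mathbb{F}_{q}}\phi(C^{\bot H})=2(n-k)$ giving a $q^{2k-n}$-dimensional code) and the weight-preservation step are correct. The only point needing care is the one you already flag: the symplectic form must be the suitably normalized trace-alternating form so that Hermitian orthogonality implies trace-symplectic orthogonality, with equality of duals then following from the dimension count for $\mathbb{F}_{q^{2}}$-linear codes; this is handled exactly as you describe in \cite{AK01}.
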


We also mention that quantum codes have propagation rules as classical linear codes.
\begin{theorem}$($\rm{\cite{FLX06}} Propagation Rules$)$ \label{propa}
Suppose there is a $q$-ary $[[n,k,d]]$ quantum code. Then
\begin{enumerate}
  \item (Subcode) there exists a $q$-ary $[[n,k-1,\geq d]]$ quantum code.
  \item (Lengthening) there exists a $q$-ary $[[n+1,k,\geq d]]$ quantum code.
  \item (Puncturing) there exists a $q$-ary $[[n-1,k,\geq d-1]]$ quantum code.
  \item there exists a $q$-ary $[[n,k,d-1]]$ quantum code.
\end{enumerate}
\end{theorem}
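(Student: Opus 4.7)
The plan is to handle each of the four rules separately, using the standard characterization that a $q$-ary $[[n,k,d]]$ quantum code is a $q^{k}$-dimensional subspace $Q\subseteq(\mathbb{C}^{q})^{\otimes n}$ that detects every error $e\in E_{n}$ with $w_{Q}(e)<d$. Rule~(1) (Subcode) is immediate from this description: any $q^{k-1}$-dimensional subspace $Q'\subseteq Q$ inherits the detectability condition verbatim, so $Q'$ is a $[[n,k-1,\geq d]]$ code. Rule~(4) is equally trivial, since a code that detects every error of quantum weight less than $d$ a fortiori detects every error of quantum weight less than $d-1$, and can therefore be regarded as an $[[n,k,d-1]]$ code.

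For rule~(2) (Lengthening) I would append a fixed ancilla state and set $Q':=Q\otimes|0\rangle\subseteq(\mathbb{C}^{q})^{\otimes(n+1)}$. By construction $Q'$ has dimension $q^{k}$ and length $n+1$. To verify the distance, let $e\in E_{n+1}$ with $w_{Q}(e)<d$, and split $e=e'\otimes e_{n+1}$ according to the last coordinate. If $e_{n+1}$ shifts the ancilla, then $e$ sends $Q'$ into a subspace orthogonal to $Q'$ and is detected by measurement on the last coordinate; otherwise $e_{n+1}$ acts only as a phase, and the detectability of $e$ reduces to that of an $E_{n}$-error of quantum weight less than $d$ on $Q$, which holds by hypothesis. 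Hence $Q'$ is an $[[n+1,k,\geq d]]$ code.

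Rule~(3) (Puncturing) is the main obstacle and is the quantum counterpart of classical puncturing. The plan is to delete one coordinate by tracing out the corresponding tensor factor (equivalently, projecting onto a fixed ancilla state on that coordinate), and then argue that the resulting subspace is a $[[n-1,k,\geq d-1]]$ code. Two things must be checked: dimension preservation, and that any undetectable error of quantum weight less than $d-1$ on the punctured code would lift to an undetectable error of quantum weight less than $d$ on $Q$, contradicting the distance of the original code. The lifting step is routine, because any error on $n-1$ coordinates can be extended by the identity on the deleted coordinate without changing its quantum weight. The delicate point is dimension preservation, which requires choosing the coordinate so that the partial trace remains full-rank on $Q$; this is a standard but technical argument in the stabilizer formalism, and I would invoke it directly from \cite{FLX06} rather than reprove it here.
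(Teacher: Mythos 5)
The paper does not prove this theorem at all: it is quoted verbatim from \cite{FLX06} as a known result, so there is no in-paper argument to compare yours against. Judged on its own, your treatment of rules (1), (2) and (4) is essentially correct and standard: a subspace of a code inherits the Knill--Laflamme detection condition with the same constants $\lambda_{e}$, tensoring with a fixed ancilla $|0\rangle$ either maps the code into an orthogonal subspace (when the last-coordinate error shifts the ancilla) or reduces to an error of no larger weight on the original code, and weakening $d$ to $d-1$ is immediate from the definition of detectability.

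Rule (3) is where the genuine gap lies, and your sketch of the ``routine'' lifting step is in fact wrong as stated. If an error $e'$ on $n-1$ coordinates could be lifted to the undetectable error $e'\otimes I$ of the \emph{same} quantum weight, you would conclude that the punctured code has distance $\geq d$, not $\geq d-1$; the loss of one in the distance is telling you that the correct lifting is not by the identity. The actual construction contracts the deleted coordinate with a fixed state $|\phi\rangle$, so that $\langle u'|e'|v'\rangle=\langle u|\bigl(|\phi\rangle\langle\phi|\otimes e'\bigr)|v\rangle$; since $|\phi\rangle\langle\phi|$ expands as a linear combination of all $q^{2}$ error-basis elements $T^{a}R^{b}$ on that coordinate, the lifted operator is a combination of errors of weight at most $w_{Q}(e')+1$, which is why one only gets detectability for $w_{Q}(e')<d-1$. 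Moreover, the ``delicate point'' is not choosing the coordinate but choosing $|\phi\rangle$: one first uses the weight-one detection conditions (hence $d\geq2$) to show that the reduced density matrix of the deleted qudit is the same operator $\sigma$ for every code state, and then takes $|\phi\rangle$ in the support of $\sigma$ so that $\langle u'|v'\rangle=\lambda_{\phi}\langle u|v\rangle$ with $\lambda_{\phi}>0$, which gives both injectivity of the contraction (dimension preservation) and the correct normalization in the detection identity. Since you invoke \cite{FLX06} for exactly the part of the argument that carries the content, and the part you do sketch would prove the wrong bound, rule (3) should be counted as unproved in your write-up.
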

\subsection{Quantum Synchronizable Codes}
 A $(c_{l},c_{r})-[[n,k]]_{2}$ quantum synchronizable code is an $[[n,k]]_{2}$ quantum code that corrects not only bit errors and phase errors but also misalignment to the left by $c_{l}$ qubits and to the right by $c_{r}$ qubits for some nonnegative integers $c_{l}$ and $c_{r}$. The framework for constructing quantum synchronizable codes involves an algebraic notion in polynomial rings. Let $f(x)\in\mathbb{F}_{2}[x]$ be a polynomial over $\mathbb{F}_{2}$ such that $f(0)=1$. The cardinality $\textup{ord}(f(x))=|\{x^{a}\pmod{f(x)}|a\in\mathbb{N}\}|$ is called the order of the polynomial $f(x)$, where $\mathbb{N}$ is the set of positive integers.

\begin{theorem}\rm{\cite{FTW13}}\label{syn}
Let $C$ be a Euclidean dual-containing cyclic $[n,k_{1},d_{1}]_{2}$ code with generator polynomial $h(x)$ and $D$ be a $C$-containing cyclic $[n,k_{2},d_{2}]_{2}$ code with generator polynomial $g(x)$. Define polynomial $f(x)$ of degree $k_{2}-k_{1}$ to be the quotient of $h(x)$ divided by $g(x)$ over $\mathbb{F}_{2}[x]/(x^{n}-1)$. Then for every pair $a_{l},a_{r}$ of nonnegative integers such that $a_{l}+a_{r}<\textup{ord}(f(x))$ there exists a quantum synchronizable $(a_{l},a_{r})-[[n+a_{l}+a_{r},2k_{1}-n]]_{2}$ code that corrects at least up to $\lfloor\frac{d_{1}-1}{2}\rfloor$ phase errors and at least up to $\lfloor\frac{d_{2}-1}{2}\rfloor$ bit errors.
\end{theorem}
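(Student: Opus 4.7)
The plan is to adapt the CSS-style construction to accommodate block misalignment by augmenting the stabilizer generators with ancillary registers that record a coset shift. First I would invoke the CSS construction to obtain a $[[n, 2k_{1}-n]]_{2}$ quantum code from the nested chain $C^{\bot E} \subseteq C \subseteq D$; the dual-containing hypothesis $C^{\bot E} \subseteq C$, together with $C \subseteq D$, is precisely what is needed so that bit-flip checks can be drawn from the parity-check structure of $D$ (yielding bit distance at least $d_{2}$) while phase-flip checks come from the dual of $C$ (yielding phase distance at least $d_{1}$). This already produces the underlying $[[n,2k_{1}-n]]_{2}$ Pauli-error-correcting code; the remaining task is to attach a synchronization mechanism without disturbing this structure.

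Next, I would treat $f(x)=h(x)/g(x)$ as the synchronization polynomial. Because $C\subseteq D$ forces $g(x)\mid h(x)$ in $\mathbb{F}_{2}[x]/(x^{n}-1)$, the quotient $f(x)$ is a well-defined polynomial of degree $k_{2}-k_{1}$ with a finite order $N=\textup{ord}(f(x))$. The encoder would prepend $a_{l}$ and append $a_{r}$ ancillary bits to each classical representative, producing a word of length $n+a_{l}+a_{r}$, and the new stabilizer group would be generated by the original CSS stabilizers (extended trivially into the ancillary positions) together with cyclic rotates of $f(x)$ embedded in the extended block. The design of these extra generators is chosen precisely so that they act as the identity on the logical subspace but transform nontrivially under a cyclic shift of the frame.

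The decoding argument then splits in two stages. Given a received word with unknown offset $j\in\{-a_{l},\dots,a_{r}\}$, one reduces the pertinent syndrome modulo $f(x)$ and invokes the defining property of $N$: the $N$ residues $x^{0},x^{1},\dots,x^{N-1}$ are pairwise distinct modulo $f(x)$, and since $a_{l}+a_{r}+1\leq N$, the $a_{l}+a_{r}+1$ candidate shift-residues are all distinct, so $j$ can be read off uniquely. After the received block is realigned, the ordinary CSS decoder corrects up to $\lfloor (d_{1}-1)/2\rfloor$ phase errors via the dual structure of $C$ and up to $\lfloor (d_{2}-1)/2\rfloor$ bit errors via $D$.

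The main obstacle I anticipate is verifying that the synchronization layer is compatible with the CSS layer: the extra stabilizer generators tied to $f(x)$ must commute with the original $X$-type and $Z$-type stabilizers coming from $C$ and $D$, and, conversely, measuring the shift syndrome must reveal no information about the encoded logical qubits. I expect this compatibility to follow from the factorization $h(x)=f(x)g(x)$ together with the cyclic structure, which forces the synchronization generators to lie in the centralizer of the Pauli stabilizer group modulo the stabilizer itself. Once commutativity and logical-triviality of the auxiliary generators are established, the two decoding stages described above combine to give exactly a quantum synchronizable $(a_{l},a_{r})$-$[[n+a_{l}+a_{r},\,2k_{1}-n]]_{2}$ code with the claimed error-correction thresholds.
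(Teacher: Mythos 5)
The paper does not prove Theorem~\ref{syn}; it is quoted from \cite{FTW13}, so the comparison below is with the proof given there. Your proposal correctly isolates the two endpoints of the argument --- the $[[n,2k_{1}-n]]_{2}$ CSS code obtained from $C^{\bot E}\subseteq C$, and the fact that $x^{j_{1}}\equiv x^{j_{2}}\pmod{f(x)}$ forces $\textup{ord}(f(x))\mid j_{1}-j_{2}$, so a shift ranging over $a_{l}+a_{r}+1\leq\textup{ord}(f(x))$ consecutive values is uniquely recoverable from $x^{j}\bmod f(x)$ --- but the mechanism you place between them does not work as described, and it is precisely this middle step that carries the content of the theorem. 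You propose to adjoin ``cyclic rotates of $f(x)$'' as extra stabilizer generators that ``act as the identity on the logical subspace but transform nontrivially under a cyclic shift of the frame.'' The difficulty is that $C$ is cyclic: a cyclic shift of a codeword of $C$ is again a codeword of $C$, and a cyclic shift of a coset $v+C^{\bot E}$ with $v\in C$ is again such a coset. Consequently every syndrome extractable from the plain CSS code is invariant under the frame shift, so no additional generators supported on the original block can reveal $j$. You never specify which operator is measured on the misaligned window and why its outcome equals $x^{j}\bmod f(x)$, and without first breaking this shift-invariance there is no such operator.

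The construction in \cite{FTW13} breaks the invariance with two ingredients your proposal omits. First, the extended block of length $n+a_{l}+a_{r}$ is not built from fresh ancillas carrying new stabilizers; it is produced by CNOT-copying the boundary qubits of the $n$-qubit codeword, so that every window of $n$ consecutive positions within the tolerated range contains exactly a cyclic shift by $j$ of the original word. Second, and crucially, before this extension the encoder displaces the computational-basis support of every basis state by the fixed vector of coefficients of $g(x)$. The supports then lie in $g(x)+C\subseteq D$, a shift by $j$ moves them into $x^{j}g(x)+C$, and reduction modulo $h(x)=f(x)g(x)$ gives $g(x)\cdot(x^{j}\bmod f(x))$; this is the quantity the syndrome measurement relative to $C$ actually returns and from which $j$ is read off. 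The same displacement is what justifies the bit-error distance $d_{2}$: bit flips are identified from the syndrome modulo $g(x)$ because the unperturbed supports are $\equiv 0\pmod{g(x)}$, so errors of weight at most $\lfloor(d_{2}-1)/2\rfloor$ are the unique minimum-weight representatives of their $D$-syndromes, while phase errors are handled in the conjugate basis through $C$ with distance $d_{1}$. Your assertion that the nested chain $C^{\bot E}\subseteq C\subseteq D$ ``already'' yields bit distance $d_{2}$ for the plain CSS code is therefore also unjustified: without the $g(x)$ offset, the CSS code built on $(C,C^{\bot E})$ has both distances governed by $C$ alone.
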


In order to compute the order of the polynomial $f(x)$, we need the following lemmas.

\begin{lemma}\rm{\cite{LN97}}\label{ord1}
Let $f(x)\in\mathbb{F}_{q}[x]$ be an irreducible polynomial over $\mathbb{F}_{q}$ of degree $m$ and with $f(0)\neq0$. Then $\textup{ord}(f(x))$ is equal to the order of any root of $f$ in the multiplicative group $\mathbb{F}_{q^{m}}^{*}$.
\end{lemma}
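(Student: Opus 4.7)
The plan is to use the standard identification $\mathbb{F}_q[x]/(f(x)) \cong \mathbb{F}_{q^m}$ afforded by irreducibility of $f$, and then read off the set $\{x^a \pmod{f(x)} : a \in \mathbb{N}\}$ as a cyclic subgroup of $\mathbb{F}_{q^m}^*$ whose order is exactly the order of a root. Concretely, let $\alpha \in \mathbb{F}_{q^m}$ be any root of $f(x)$; since $f$ is irreducible of degree $m$, the evaluation map $x \mapsto \alpha$ extends to a ring isomorphism $\varphi : \mathbb{F}_q[x]/(f(x)) \to \mathbb{F}_{q^m}$. This is the only genuine structural input required, and everything else is essentially definitional.

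Under $\varphi$, the residue class $x^a \pmod{f(x)}$ is sent to $\alpha^a$. The hypothesis $f(0) \neq 0$ ensures $\alpha \neq 0$, so $\alpha \in \mathbb{F}_{q^m}^*$, and the image set $\{\alpha^a : a \in \mathbb{N}\}$ is the cyclic subgroup $\langle \alpha\rangle$ of $\mathbb{F}_{q^m}^*$. Since $\varphi$ is a bijection, the cardinality on the left (which defines $\textup{ord}(f(x))$) equals $|\langle\alpha\rangle|$, which is the multiplicative order of $\alpha$. This establishes the equality for the particular root $\alpha$.

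Finally, I would justify that the order does not depend on the choice of root. Because $f$ is irreducible over $\mathbb{F}_q$, its roots in $\mathbb{F}_{q^m}$ form a single Frobenius orbit $\{\alpha, \alpha^q, \alpha^{q^2}, \dots, \alpha^{q^{m-1}}\}$. The Frobenius $y \mapsto y^q$ is an automorphism of $\mathbb{F}_{q^m}^*$, hence preserves multiplicative order, so every root of $f$ has the same order in $\mathbb{F}_{q^m}^*$, which equals $\textup{ord}(f(x))$ by the previous paragraph. There is no substantial obstacle here; the only point worth checking carefully is that the $x \mapsto \alpha$ construction of $\varphi$ really is a well-defined ring isomorphism, which follows because $(f(x))$ is a maximal ideal of $\mathbb{F}_q[x]$ and the quotient and $\mathbb{F}_q(\alpha)$ share the same $\mathbb{F}_q$-dimension $m$.
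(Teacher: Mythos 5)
Your proof is correct. Note that the paper does not prove this lemma at all -- it is quoted from Lidl and Niederreiter \cite{LN97} as a known fact -- so there is no in-paper argument to compare against; your route (the isomorphism $\mathbb{F}_q[x]/(f(x))\cong\mathbb{F}_{q^m}$ sending $x^a$ to $\alpha^a$, plus Frobenius conjugacy of the roots to get independence of the choice of root) is precisely the standard textbook proof, and all the details you flag (well-definedness of $\varphi$, the role of $f(0)\neq 0$ in guaranteeing $\alpha\in\mathbb{F}_{q^m}^{*}$ so that $\{\alpha^a : a\in\mathbb{N}\}=\langle\alpha\rangle$) are handled correctly.
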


\begin{lemma}\rm{\cite{LN97}}\label{ord2}
Let $g_{1}(x),\cdots,g_{k}(x)$ be pairwise relatively prime nonzero polynomials over $\mathbb{F}_{q}$, and let $f(x)=g_{1}(x)\cdots g_{k}(x)$. Then $\textup{ord}(f(x))$ is equal to the least common multiple of $\textup{ord}(g_{1}(x)),\cdots,\textup{ord}(g_{k}(x))$.
\end{lemma}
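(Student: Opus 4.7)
The plan is to reduce the statement about $\textup{ord}(f(x))$ to a standard fact about the order of an element in a direct product of finite groups, by using the Chinese Remainder Theorem on the quotient ring $\mathbb{F}_{q}[x]/(f(x))$. The first thing I would do is translate the defining condition. Because the $g_{i}$ are pairwise coprime and nonzero, and each has a defined order (so $g_{i}(0)\neq 0$), the product $f$ satisfies $f(0)\neq 0$, hence $\gcd(x,f(x))=1$ and $x$ is a unit in $R:=\mathbb{F}_{q}[x]/(f(x))$. The set $\{x^{a}\bmod f(x):a\in\mathbb{N}\}$ is then the cyclic subgroup of $R^{*}$ generated by $x$, so its cardinality $\textup{ord}(f(x))$ is precisely the multiplicative order of the element $x$ in $R^{*}$. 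The same remark applies to each $\textup{ord}(g_{i}(x))$ as an element-order inside $\mathbb{F}_{q}[x]/(g_{i}(x))^{*}$.

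Next I would invoke CRT: since the $g_{i}$ are pairwise coprime, the natural reduction map gives a ring isomorphism
\[
\varphi:\ \mathbb{F}_{q}[x]/(f(x))\ \longrightarrow\ \prod_{i=1}^{k}\mathbb{F}_{q}[x]/(g_{i}(x)),
\]
sending the class of $x$ to the tuple $(x\bmod g_{1}(x),\ldots,x\bmod g_{k}(x))$. Restricting to unit groups gives an isomorphism of multiplicative groups, and $\varphi$ carries the generator $x$ of the relevant cyclic subgroup to a tuple whose $i$-th coordinate has multiplicative order exactly $\textup{ord}(g_{i}(x))$.

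Finally I would use the elementary group-theoretic fact that in a direct product $G_{1}\times\cdots\times G_{k}$ of groups, the order of an element $(y_{1},\ldots,y_{k})$ equals $\mathrm{lcm}(|y_{1}|,\ldots,|y_{k}|)$: an exponent $a$ kills the tuple iff it kills each coordinate, so $|y_{i}|\mid a$ for every $i$, and the minimal such $a$ is the least common multiple. Applying this to $\varphi(x)$ yields $\textup{ord}(f(x))=\mathrm{lcm}\bigl(\textup{ord}(g_{1}(x)),\ldots,\textup{ord}(g_{k}(x))\bigr)$, which is the claim. The only subtlety — and the single step deserving care — is verifying that the "cardinality of powers" definition really does coincide with a multiplicative order on each factor; once $f(0)\neq 0$ is noted, this reduces to the observation that any cyclic submonoid of a finite group is in fact a subgroup.
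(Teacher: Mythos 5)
The paper gives no proof of this lemma at all: it is quoted verbatim from \cite{LN97}, so the only thing to compare against is the standard textbook argument. Your proof is correct, and it takes a genuinely different route. The usual proof is a pure divisibility argument: with $e=\mathrm{lcm}\bigl(\textup{ord}(g_{1}),\ldots,\textup{ord}(g_{k})\bigr)$, each $g_{i}$ divides $x^{e}-1$ because $\textup{ord}(g_{i})\mid e$, and pairwise coprimality lets one multiply these divisibilities to conclude $f\mid x^{e}-1$, hence $\textup{ord}(f)\mid e$; conversely $f\mid x^{\textup{ord}(f)}-1$ forces $g_{i}\mid x^{\textup{ord}(f)}-1$ and so $\textup{ord}(g_{i})\mid\textup{ord}(f)$ for every $i$, giving $e\mid\textup{ord}(f)$. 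Your CRT argument packages exactly the same content into the statement that the order of an element of a direct product $G_{1}\times\cdots\times G_{k}$ is the lcm of its coordinate orders; it is more structural and arguably more illuminating, at the cost of first verifying that the ``cardinality of the set of powers of $x$'' definition used in the paper coincides with the multiplicative order of $x$ in $\bigl(\mathbb{F}_{q}[x]/(f(x))\bigr)^{*}$ --- a point you do handle correctly by observing that $f(0)\neq0$ makes $x$ a unit and that the powers of an element of finite order form a group containing $1$. One pedantic remark: as stated the lemma allows arbitrary nonzero $g_{i}$ (Lidl and Niederreiter extend $\textup{ord}$ to polynomials divisible by $x$ by stripping the power of $x$), whereas your argument silently assumes every $g_{i}(0)\neq0$; this is harmless in context, since the paper defines $\textup{ord}$ only for $f(0)=1$ and applies the lemma exclusively to divisors of $x^{p^{n}}-1$.
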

\section{New Quantum Codes from Polynomial Codes}\label{quantumcodes}
This section gives a new construction of classical linear codes and provides a family of quantum codes. The following notations are fixed in this section.
\begin{itemize}
  \item Let $q=p^{a}$ be a prime power, where $p$ is a prime number.
  \item Let $m$ be a positive integer with $\textup{gcd}(q,m)=1$ and $\textup{ord}_{m}(q)=p^{b}$, where $b\geq1$ is a positive integer.
  \item Let $t=p^{b}$ and $s=p^{b-1}$.
\end{itemize}
\subsection{Polynomial Codes}
Note that $\textup{ord}_{m}(q)=t$. For any $a\in\mathbb{Z}_{m}=\{0,1,\cdots,m-1\}$, the $q$-cyclotomic coset $C_{a}$ belonging to $a$ is defined by
$$C_{a}:=\{aq^{j}\pmod{m}|0\leq j\leq t-1\}.$$
Now we choose a representation of the $q$-cyclotomic coset with maximum length:
$$A:=\{\textup{max}(C_{a})|0\leq a\leq m-1, |C_{a}|=t\},$$
where $\textup{max}(C_{a})$ is the maximum element of set $C_{a}$. We are ready to define the following polynomials.
\begin{definition}
For every $a\in A$ and every integer $k$ such that $0\leq k\leq s-1$, let
$$e_{a,k}(x)=\sum_{j=0}^{t-1}\gamma^{q^{j+k}}x^{q^{j}a},$$
where $\gamma$ is a fixed normal element of $\mathbb{F}_{q^{s}}/\mathbb{F}_{q}$, i.e., $\gamma,\gamma^{q},\cdots,\gamma^{q^{s-1}}$ is an $\mathbb{F}_{q}$-basis of $\mathbb{F}_{q^{s}}$.
\end{definition}
Let $B:=\{e_{a,k}(x)|a\in A,0\leq k\leq s-1\}$ be the set of all polynomials just defined and $U_{m}$ be the subgroup of the $m$-th roots of unity in $\mathbb{F}_{q^{t}}^{*}$. Then we have the following result.
\begin{lemma}\label{lemma1}
The polynomials $e_{a,k}(x)$ have the following properties:
\begin{enumerate}
  \item[(i)] For $a\in A$ and $0\leq k\leq s-1$, the polynomial $e_{a,k}(x)$ has coefficients in $\mathbb{F}_{q^{s}}$ and degree equal to $a$.
  \item[(ii)] The polynomials $e_{a,k}(x)$ for $a\in A$ and $0\leq k\leq s-1$ are linearly independent over $\mathbb{F}_{q}$.
  \item[(iii)] $e_{a,k}(\beta)\in\mathbb{F}_{q}$ for all $\beta\in U_{m}$.
  \item[(iv)] $e_{a,k}(u)=0$ for all $u\in(\mathbb{F}_{q^{s}}\bigcap U_{m})\bigcup\{0\}$.
  \item[(v)] $|B|=\frac{m-\textup{gcd}(m,q^{s}-1)}{p}$.
\end{enumerate}
\end{lemma}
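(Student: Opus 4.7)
The proof plan is to handle the five properties in order, since each relies mainly on elementary manipulation of Frobenius and basic cyclotomic coset counting.

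For (i), since $\gamma\in\mathbb{F}_{q^s}$ we have $\gamma^{q^s}=\gamma$, so every coefficient $\gamma^{q^{j+k}}$ lies in $\mathbb{F}_{q^s}$. The exponents appearing in $e_{a,k}(x)$ are exactly the elements of $C_a$ (reduced mod $m$), and by definition $a=\max(C_a)$, so the polynomial has degree $a$ with leading coefficient $\gamma^{q^k}$ (contributed by $j=0$). For (ii), I would exploit that distinct cosets $C_a$ with $a\in A$ are disjoint, so for each $a\in A$ the monomial $x^a$ appears only in the polynomials $e_{a,0},\dots,e_{a,s-1}$, with coefficient $\gamma^{q^k}$. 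Given a dependence $\sum_{a,k}c_{a,k}e_{a,k}=0$ with $c_{a,k}\in\mathbb{F}_q$, reading off the coefficient of $x^a$ gives $\sum_{k=0}^{s-1}c_{a,k}\gamma^{q^k}=0$, and the normal basis property forces $c_{a,k}=0$.

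For (iii), the key move is to show $e_{a,k}(\beta)$ is Frobenius-fixed. Raising to the $q$-th power and shifting $j\mapsto j+1$, the surviving issue is the $j=t-1$ term, which becomes $\gamma^{q^{t+k}}\beta^{q^{t}a}$. Since $\beta\in U_m$ and $q^t\equiv 1\pmod m$, we get $\beta^{q^t}=\beta$; since $s\mid t$ and $\gamma^{q^s}=\gamma$, we get $\gamma^{q^t}=\gamma$. So this term is just $\gamma^{q^k}\beta^a$, i.e. the original $j=0$ term, and hence $(e_{a,k}(\beta))^q=e_{a,k}(\beta)$. For (iv), the $u=0$ case is immediate since $a\in A$ forces $a\geq 1$. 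For $u\in\mathbb{F}_{q^s}\cap U_m$, I would write $j=ls+i$ with $0\leq i<s$, $0\leq l<p$; since $\gamma^{q^s}=\gamma$ and $u^{q^s}=u$, the summand depends only on $i$, so
\[
e_{a,k}(u)=\sum_{i=0}^{s-1}\sum_{l=0}^{p-1}\gamma^{q^{i+k}}u^{q^i a}=p\sum_{i=0}^{s-1}\gamma^{q^{i+k}}u^{q^i a}=0
\]
in characteristic $p$.

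For (v), I would count cosets directly. Since $\mathrm{ord}_m(q)=t=p^b$, the size $|C_a|$ divides $t$, so $|C_a|<t$ iff $|C_a|\mid s$ iff $aq^s\equiv a\pmod m$, i.e.\ $\tfrac{m}{\gcd(m,q^s-1)}\mid a$. The number of such $a\in\{0,\dots,m-1\}$ is $\gcd(m,q^s-1)$, leaving $m-\gcd(m,q^s-1)$ elements lying in full-length cosets, hence $|A|=(m-\gcd(m,q^s-1))/t$ and $|B|=s\cdot|A|=(m-\gcd(m,q^s-1))/p$ since $t=ps$. None of these steps looks hard individually; the only place requiring genuine care is (ii), where one must make sure that the normal-basis argument only needs the top-degree coefficient rather than the full coefficient matrix, which is what makes the argument clean.
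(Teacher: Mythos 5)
Your proof is correct and takes essentially the same route as the paper's: the Frobenius/reindexing computation for (iii), the substitution $j=ls+i$ giving a factor of $p$ for (iv), the leading-coefficient-plus-normal-basis argument for (ii), and the count of elements $a$ with $\frac{m}{\gcd(m,q^{s}-1)}\mid a$ for (v). You merely spell out some steps the paper leaves implicit (e.g., that $|C_a|<t$ forces $|C_a|\mid s$ because $|C_a|$ is a power of $p$, and the explicit coefficient extraction at $x^{a}$ in (ii)).
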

\begin{proof}
(i). All the assertions are clear.

(ii). If $a$ and $b$ are in different $q$-cyclotomic cosets, then $e_{a,k_{1}}(x)$ and $e_{b,k_{2}}(x)$ have different degrees, they are linearly independent. Since $\gamma,\gamma^{q},\cdots,\gamma^{q^{s-1}}$ are an $\mathbb{F}_{q}$-basis of $\mathbb{F}_{q^{s}}$, $e_{a,k}(x)$ for $0\leq k\leq s-1$ are linearly independent over $\mathbb{F}_{q}$.

(iii). Let $\beta\in U_{m}$, then
 \begin{align*}
(e_{a,k}(\beta))^{q}&=(\sum_{j=0}^{t-1}\gamma^{q^{j+k}}\beta^{q^{j}a})^{q} \\
          &=\sum_{j=0}^{t-1}\gamma^{q^{j+k+1}}\beta^{q^{j+1}a}\\
          &=\sum_{j=1}^{t-1}\gamma^{q^{j+k}}\beta^{q^{j}a}+\gamma^{q^{t+k}}\beta^{q^{t}a}\\
          &=\sum_{j=0}^{t-1}\gamma^{q^{j+k}}\beta^{q^{j}a},
\end{align*}
so $e_{a,k}(\beta)\in\mathbb{F}_{q}$.

(iv). It is clear that $e_{a,k}(0)=0$. Let $u\in\mathbb{F}_{q^{s}}\bigcap U_{m}$, then
$$e_{a,k}(u)=\sum_{j=0}^{t-1}\gamma^{q^{j+k}}u^{q^{j}a}=\frac{t}{s}\sum_{j=0}^{s-1}\gamma^{q^{j+k}}u^{q^{j}a}=p\sum_{j=0}^{s-1}\gamma^{q^{j+k}}u^{q^{j}a}=0.$$

(v). We can verify that $|C_{a}|<t$ if and only if $m|a(q^{s}-1)$, that is $\frac{m}{\textup{gcd}(m,q^{s}-1)}|a$. Since every $q$-cyclotomic coset with $t$ elements corresponds to $s$ polynomials, we have $|B|=\frac{m-\textup{gcd}(m,q^{s}-1)}{p}$.
\end{proof}

In order to give our construction, let
\begin{itemize}
  \item $L\subseteq A$, $S:=\bigcup_{a\in L}C_{a}$, $\mathfrak{A}:=\bigcup_{a\in A}C_{a}$,
  \vspace{5pt}
  \item $B(S):=\{e_{a,k}(x)|a\in S, 0\leq k\leq s-1\}$,
  \vspace{5pt}
  \item $V(S):=\textup{span}_{\mathbb{F}_{q}}(B(S))$,
  \vspace{5pt}
  \item $\{\beta_{1},\beta_{2},\cdots,\beta_{n}\}$ be a complete set of representatives of elements from $U_{m}\backslash\mathbb{F}_{q^{s}}^{*}$ which are pairwise nonconjugate over $\mathbb{F}_{q^{s}}$, where $U_{m}$ is the subgroup of the $m$-th roots of unity in $\mathbb{F}_{q^{t}}^{*}$.
\end{itemize}
It is clear that $n=\frac{m-\textup{gcd}(m,q^{s}-1)}{p}$. Our construction is:
\begin{proposition}\label{prop3}
Let
$$C(S):=\{(f(\beta_{1}),f(\beta_{2}),\cdots,f(\beta_{n}))|f\in V(S)\}.$$
Then $C(S)$ is an $\mathbb{F}_{q}$-linear $[n,k,d]$ code, where $n=\frac{m-\textup{gcd}(m,q^{s}-1)}{p}$, $k=|B(S)|$, $d\geq\lceil \frac{m+1-c}{p}\rceil$ and $c$ is the maximum element of set $S$.
\end{proposition}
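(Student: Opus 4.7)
The plan is to analyze the $\mathbb{F}_q$-linear evaluation map $\Phi \colon V(S) \to \mathbb{F}_q^n$, $f \mapsto (f(\beta_1),\ldots,f(\beta_n))$, and to control both its kernel (to pin down $k$) and the Hamming weight of every nonzero image (to lower-bound $d$) by counting zeros of $f$ inside $U_m \cup \{0\}$.

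First I would collect the easy facts. By Lemma~\ref{lemma1}(iii) every $f \in V(S)$ takes values in $\mathbb{F}_q$ on $U_m$, so $\Phi$ is indeed $\mathbb{F}_q$-linear with image in $\mathbb{F}_q^n$. For the length, since $[\mathbb{F}_{q^t}:\mathbb{F}_{q^s}] = p$ is prime, any $\beta \in U_m \setminus \mathbb{F}_{q^s}^*$ satisfies $\beta^{q^s} \neq \beta$, so its $\mathbb{F}_{q^s}$-conjugacy class $\{\beta,\beta^{q^s},\ldots,\beta^{q^{(p-1)s}}\}$ has size exactly $p$; hence $|U_m \setminus \mathbb{F}_{q^s}^*| = m - \textup{gcd}(m,q^s-1)$ partitions into $n$ such classes, matching the stated length.

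The heart of the argument is a root count. Pick a nonzero $f \in V(S)$. Because each $e_{a,k}$ has coefficients in $\mathbb{F}_{q^s}$ by Lemma~\ref{lemma1}(i), $f$ does as well, so $f(\beta^{q^s}) = f(\beta)^{q^s}$ for every $\beta$. If $w$ denotes the Hamming weight of $\Phi(f)$, then $f$ vanishes at $n-w$ of the representatives $\beta_i$, and by the conjugacy observation it therefore vanishes at $(n-w)p$ points of $U_m \setminus \mathbb{F}_{q^s}^*$. Adding in the $\textup{gcd}(m,q^s-1)$ zeros on $U_m \cap \mathbb{F}_{q^s}$ and the zero at $x=0$ supplied by Lemma~\ref{lemma1}(iv) yields at least
\[
(n-w)p + \textup{gcd}(m,q^s-1) + 1 \;=\; m - wp + 1
\]
roots of $f$. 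On the other hand $\deg f \leq c \leq m-1$, so for $f \neq 0$ we are forced into $m - wp + 1 \leq c$.

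Two conclusions then drop out. Taking $w=0$ would force $m+1 \leq c$, contradicting $c \leq m-1$, so $\Phi$ is injective and $\dim C(S) = \dim V(S) = |B(S)|$ via Lemma~\ref{lemma1}(ii). For every other nonzero $f$ the inequality rearranges to $w \geq \lceil (m+1-c)/p \rceil$, the claimed distance bound. The main obstacle is the exact size-$p$ claim for every nontrivial $\mathbb{F}_{q^s}$-conjugacy class in $U_m$, which rests on the primality of $[\mathbb{F}_{q^t}:\mathbb{F}_{q^s}] = p$ forced by the hypothesis $\textup{ord}_m(q) = p^b$; once that is in hand the counting is completely rigid.
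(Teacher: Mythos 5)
Your proof is correct and follows essentially the same route as the paper's: bound the number of roots of a nonzero $f\in V(S)$ from below using the zeros at $(\mathbb{F}_{q^{s}}\cap U_{m})\cup\{0\}$ and the size-$p$ Frobenius conjugacy classes, then compare with $\deg f\leq c$. You are somewhat more complete than the paper, which only argues the distance bound, whereas you also make explicit that the same root count forces injectivity of the evaluation map (hence $k=|B(S)|$) and that the conjugacy classes have size exactly $p$ (hence the stated value of $n$).
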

\begin{proof}
We only need to show that $d\geq\lceil \frac{m+1-c}{p}\rceil$. On one hand, by Lemma~\ref{lemma1}, $f(u)=0$ for all $u\in(\mathbb{F}_{q^{s}}\bigcap U_{m})\bigcup\{0\}$. On the other hand, if $\beta_{i}$ is a root of $f(x)$, then so are all the $p$ conjugate elements $\beta_{i},\cdots,\beta_{i}^{q^{s(p-1)}}$. Therefore $f(x)$ has at most $\frac{\textup{deg}(f(x))-\textup{gcd}(m,q^{s}-1)-1}{p}$ roots among $\{\beta_{1},\beta_{2},\cdots,\beta_{n}\}$. Hence its Hamming weight is at least $n-\frac{c-\textup{gcd}(m,q^{s}-1)-1}{p}=\frac{m+1-c}{p}$.
\end{proof}

\subsection{New Quantum Codes}
In this section, we will construct some new quantum codes from polynomial codes. First of all, we determine the dual of the code $C(S)$. Let $\overline{C_{a}}=\{m-i|i\in C_{a}\}$ and $\overline{S}=\bigcup_{a\in S}\overline{C_{a}}$, then we have
\begin{proposition}\label{prop1}
The Euclidean dual of $C(S)$ is $C(R)$, where $R=\mathfrak{A}\backslash \overline{S}$.
\end{proposition}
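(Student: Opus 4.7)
The plan is two-pronged: show $C(R)\subseteq C(S)^{\bot E}$ by pairwise orthogonality of basis elements, and match dimensions to conclude equality. The dimension count is easy: for any $a\in A$ the conjugate set $\overline{C_a}=C_{m-a}$ is a $q$-cyclotomic coset of the same size $t$, so $\overline{S}\subseteq\mathfrak{A}$ with $|\overline{S}|=|S|$. By Proposition~\ref{prop3} applied to both $S$ and $R$, $\dim C(S)+\dim C(R)=|B(S)|+|B(R)|=(|S|+|\mathfrak{A}|-|S|)/p=|\mathfrak{A}|/p=n$.

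For the orthogonality, it suffices to verify $\sum_{i=1}^{n}e_{a,k}(\beta_i)e_{b,k'}(\beta_i)=0$ for all $a\in S$, $b\in R$, and $k,k'\in\{0,\dots,s-1\}$. Rewriting $e_{a,k}(\beta)=\textup{Tr}_{\mathbb{F}_{q^t}/\mathbb{F}_q}(\gamma^{q^k}\beta^a)$, expanding the product of the two traces as a double sum over indices $(j,l)$, and collecting terms along the diagonal $d=j-l\pmod t$ yields the compact formula
\begin{equation*}
e_{a,k}(\beta)\,e_{b,k'}(\beta)=\sum_{d=0}^{t-1}\textup{Tr}_{\mathbb{F}_{q^t}/\mathbb{F}_q}\!\left(\gamma^{q^{k+d}}\gamma^{q^{k'}}\beta^{q^d a+b}\right).
\end{equation*}
Since $\gamma^{q^{k+d}}\gamma^{q^{k'}}\in\mathbb{F}_{q^s}$, the outer trace factors as $\textup{Tr}_{\mathbb{F}_{q^s}/\mathbb{F}_q}\circ\textup{Tr}_{\mathbb{F}_{q^t}/\mathbb{F}_{q^s}}$, and the identity $\sum_{i=1}^{n}\textup{Tr}_{\mathbb{F}_{q^t}/\mathbb{F}_{q^s}}(\beta_i^{N})=\sum_{\beta\in U_m\setminus\mathbb{F}_{q^s}^{*}}\beta^{N}$ reduces the computation to a standard character sum: with $u:=\textup{gcd}(m,q^s-1)$, this equals $m-u$ if $m\mid N$, $-u$ if $u\mid N$ but $m\nmid N$, and $0$ otherwise. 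For our pair $(a,b)$, the case $m\mid q^d a+b$ cannot occur, since it would force $b\in\overline{C_a}\subseteq\overline{S}$, contradicting $b\in R=\mathfrak{A}\setminus\overline{S}$. The expression therefore collapses to $-u\sum_{d\in D}\textup{Tr}_{\mathbb{F}_{q^s}/\mathbb{F}_q}(\gamma^{q^{k+d}}\gamma^{q^{k'}})$, where $D:=\{d\in\{0,\dots,t-1\}:u\mid q^d a+b\}$.

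To close the argument, one shows $\sum_{d\in D}\gamma^{q^{k+d}}=0$. Since $u\mid q^s-1$, the set $D$ is either empty or an arithmetic progression with common difference $e_a:=\textup{ord}_{u/\textup{gcd}(a,u)}(q)$, which divides $s$. In the nontrivial case $|D|=ps/e_a$, and as $d$ ranges over $D$ the residue $d\bmod s$ takes each of exactly $s/e_a$ distinct values precisely $p$ times. Because $\gamma\in\mathbb{F}_{q^s}$ makes $\gamma^{q^{k+d}}$ depend only on $(k+d)\bmod s$, the $\gamma$-sum is $p$ copies of a shorter sum and therefore vanishes in characteristic $p$. Combining this with the dimension equality yields $C(R)=C(S)^{\bot E}$. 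The main technical subtlety is precisely this final step: one must identify the common difference $e_a$ correctly (including when $\textup{gcd}(a,u)>1$) so that the $p$-fold repetition materializes and forces the characteristic-$p$ cancellation.
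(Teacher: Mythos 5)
Your proof is correct and follows essentially the same route as the paper's: reduce to orthogonality of the generating codewords, evaluate the resulting character sum over $U_{m}\backslash\mathbb{F}_{q^{s}}^{*}$, obtain the final cancellation from the $s$-periodicity of the condition $\textup{gcd}(m,q^{s}-1)\mid q^{d}a+b$ (which yields a factor of $p$ in characteristic $p$), and finish with the dimension count. If anything you are slightly more careful than the paper, which silently omits the case $m\mid q^{d}a+b$ that you correctly rule out using $b\notin\overline{S}$.
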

\begin{proof}
Since $\textup{dim}(C(S))+\textup{dim}(C(R))=n$, we only need to show that every codeword in $C(S)$ is orthogonal to all codewords of $C(R)$.

For $a\in S$, $b\in R$ and $0\leq k_{1},k_{2}\leq s-1$, we have
 \begin{align*}
\sum_{i=1}^{n}e_{a,k_{1}}(\beta_{i})e_{b,k_{2}}(\beta_{i})&=\sum_{i=1}^{n}(\sum_{j=0}^{t-1}\gamma^{q^{j+k_{1}}}\beta_{i}^{q^{j}a})(\sum_{l=0}^{t-1}\gamma^{q^{l+k_{2}}}\beta_{i}^{q^{l}b}) \\
&=\sum_{i=1}^{n}\sum_{j=0}^{t-1}\sum_{l=0}^{t-1}\gamma^{q^{j+k_{1}}+q^{l+k_{2}}}\beta_{i}^{q^{j}a+q^{l}b}\\
&=\sum_{j=0}^{t-1}(\sum_{l=0}^{t-1}(\sum_{i=1}^{n}\gamma^{q^{j+k_{1}}+q^{k_{2}}}\beta_{i}^{q^{j}a+b})^{q^{l}})\\
&=\sum_{j=0}^{t-1}(\sum_{l_{1}=0}^{s-1}\sum_{l_{2}=0}^{p-1}(\sum_{i=1}^{n}\gamma^{q^{j+k_{1}}+q^{k_{2}}}\beta_{i}^{q^{j}a+b})^{q^{l_{1}+l_{2}s}})\\
&=\sum_{j=0}^{t-1}\sum_{l_{1}=0}^{s-1}\gamma^{q^{j+k_{1}+l_{1}}+q^{k_{2}+l_{1}}}(\sum_{l_{2}=0}^{p-1}(\sum_{i=1}^{n}\beta_{i}^{q^{j}a+b})^{q^{l_{1}+l_{2}s}}).
\end{align*}
Note that  \[\sum_{l_{2}=0}^{p-1}(\sum_{i=1}^{n}\beta_{i}^{q^{j}a+b})^{q^{l_{1}+l_{2}s}}=\sum_{\beta\in U_{m}}\beta^{q^{j}a+b}-\sum_{\beta\in\mathbb{F}_{q^{s}}\bigcap U_{m}}\beta^{q^{j}a+b}=\begin{cases}0;&\textup{ if }\textup{gcd}(m,q^{s}-1)\nmid q^{j}a+b,\\
-\textup{gcd}(m,q^{s}-1);&\textup{ if }\textup{gcd}(m,q^{s}-1)| q^{j}a+b,\end{cases}\]
we have
 \begin{align*}
\sum_{i=1}^{n}e_{a,k_{1}}(\beta_{i})e_{b,k_{2}}(\beta_{i})&=-\textup{gcd}(m,q^{s}-1)\sum_{\substack{j=0 \\ \textup{gcd}(m,q^{s}-1)| q^{j}a+b}}^{t-1}\sum_{l_{1}=0}^{s-1}\gamma^{q^{j+k_{1}+l_{1}}+q^{k_{2}+l_{1}}}\\
&=-\textup{gcd}(m,q^{s}-1)p\sum_{\substack{j=0 \\ \textup{gcd}(m,q^{s}-1)| q^{j}a+b}}^{s-1}\sum_{l_{1}=0}^{s-1}\gamma^{q^{j+k_{1}+l_{1}}+q^{k_{2}+l_{1}}}\\
&=0.
\end{align*}
Since every codeword of $C(S)$ $(C(R))$ is an $\mathbb{F}_{q}$-linear combination of $(e_{a,k_{1}}(\beta_{i}))_{i}$ $((e_{b,k_{2}}(\beta_{i}))_{i},\textup{ respectively})$, where $a\in S$, $b\in R$ and $0\leq k_{1},k_{2}\leq s-1$. We conclude that the Euclidean dual of $C(S)$ is $C(R)$, where $R=\mathfrak{A}\backslash \overline{S}$.
\end{proof}

In order to apply our result to quantum codes, we want to discuss the Hermitian dual of $C(S)$ as well.

\begin{proposition}\label{propo2}
Let $q=l^{2}$, then the Hermitian dual of $C(S)$ is $C(R)$, where $R=\mathfrak{A}\backslash \overline{lS}$, $lS=\{ls|s\in S\}$.
\end{proposition}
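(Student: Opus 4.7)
The plan is to reduce the Hermitian case to the Euclidean case of Proposition~\ref{prop1} via the coordinate-wise Frobenius map. Because $q=l^{2}$ and $l$ is a power of $p$, the map
\[
\sigma\colon\mathbb{F}_{q}^{n}\longrightarrow\mathbb{F}_{q}^{n},\qquad (c_{1},\ldots,c_{n})\longmapsto(c_{1}^{l},\ldots,c_{n}^{l})
\]
is additive, and it sends every $\mathbb{F}_{q}$-linear code $C$ to an $\mathbb{F}_{q}$-linear code $\sigma(C)$ of the same dimension. Moreover $\langle x,y\rangle_{H}=\langle x,\sigma(y)\rangle_{E}$, so one has
\[
C^{\bot H}=\sigma(C)^{\bot E}
\]
for every such $C$. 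Hence the proof reduces to establishing the identity $\sigma(C(S))=C(lS)$, after which Proposition~\ref{prop1} immediately gives
\[
C(S)^{\bot H}=C(lS)^{\bot E}=C(\mathfrak{A}\setminus\overline{lS})=C(R).
\]

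The core computation is the effect of $\sigma$ on each generator $(e_{a,k}(\beta_{i}))_{i}$. Fix $a\in S\cap A$ and $0\le k\le s-1$, and let $a'=\max(C_{la})\in A$ be the canonical representative of the $q$-cyclotomic coset $C_{la}\subseteq lS$, with $la\equiv a'q^{i_{0}}\pmod{m}$. Since $l$ is a power of $p$, raising to the $l$th power distributes over sums; using $\beta^{m}=1$ and the substitution $j'=j+i_{0}\pmod{t}$ (valid because $\gamma^{q^{t}}=\gamma$) one obtains
\[
e_{a,k}(\beta)^{l}=\sum_{j=0}^{t-1}\gamma^{lq^{j+k}}\beta^{q^{j}(la)}=\sum_{j'=0}^{t-1}(\gamma^{l})^{q^{j'+k-i_{0}}}\beta^{a'q^{j'}}.
\]
Since $\gamma^{l}\in\mathbb{F}_{q^{s}}$ and $\{\gamma^{q^{r}}:0\le r\le s-1\}$ is an $\mathbb{F}_{q}$-basis of $\mathbb{F}_{q^{s}}$, expand $\gamma^{l}=\sum_{r=0}^{s-1}c_{r}\gamma^{q^{r}}$ with $c_{r}\in\mathbb{F}_{q}$. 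The Frobenius-invariance of the $c_{r}$ together with $\gamma^{q^{s}}=\gamma$ lets us distribute the power on the right-hand side to obtain
\[
e_{a,k}(\beta)^{l}=\sum_{r=0}^{s-1}c_{r}\,e_{a',\,k+r-i_{0}}(\beta),
\]
the second index being read modulo $s$. This exhibits every coordinate vector in $\sigma(C(S))$ as an $\mathbb{F}_{q}$-linear combination of generators of $C(lS)$, so $\sigma(C(S))\subseteq C(lS)$. Equality follows from dimensions: by Lemma~\ref{lemma1}(v), $\dim_{\mathbb{F}_{q}}C(S)=|S|/p$, and multiplication by $l$ is a bijection on $\mathbb{Z}_{m}$ that preserves the lengths of $q$-cyclotomic cosets, so $|lS|=|S|$ and $\dim_{\mathbb{F}_{q}}C(lS)=|S|/p$ as well.

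The main obstacle will be the index bookkeeping in the displayed identity for $e_{a,k}(\beta)^{l}$: one must carefully track how the shift $i_{0}$ coming from writing $la$ in canonical form combines with the second index $k$, and verify that expanding $\gamma^{l}$ in the normal basis reassembles the result into bona fide generators $e_{a',k'}$ of $V(lS)$ rather than foreign polynomials. Once this identity is in hand, the whole argument collapses to an application of Proposition~\ref{prop1}, so the rest is essentially automatic.
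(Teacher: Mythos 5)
Your proposal is correct and takes essentially the same route as the paper: the paper's entire proof is the assertion that the Hermitian dual of $C(S)$ equals the Euclidean dual of $C(lS)$, followed by an appeal to Proposition~\ref{prop1}, which is precisely the reduction you carry out. Your Frobenius computation showing $\sigma(C(S))=C(lS)$ (with the dimension count to get equality) just supplies the details the paper dismisses as ``clear.''
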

\begin{proof}
It is clear that the Hermitian dual of $C(S)$ is the Euclidean dual of $C(lS)$. Then the desired result follows from Proposition~\ref{prop1}.
\end{proof}

Now we state our main result.
\begin{theorem}\label{mainthm}
Let $q=p^{2e}$ be a prime power, where $p$ is a prime number and $e$ is a positive integer. If there exist an integer $m$ and a finite set $S$ satisfying the following conditions:
\begin{enumerate}
  \item $\textup{gcd}(q,m)=1$ and $\textup{ord}_{m}(q)=p^{b}$, where $b\geq1$ is a positive integer;
  \vspace{5pt}
  \item $L\subseteq A$, $S=\bigcup_{a\in L}C_{a}$, $\mathfrak{A}=\bigcup_{a\in A}C_{a}$, $S\bigcup \overline{p^{e}S}\supseteq\mathfrak{A}$, where $C_{a}$ is $q$-cyclotomic coset modulo $m$ and $A=\{\textup{max}(C_{a})|0\leq a\leq m-1, |C_{a}|=p^{b}\}$;
\end{enumerate}
 then there exists a $p^{e}$-ary quantum code $[[n,k,d]]$, where $n=\frac{m-\textup{gcd}(m,q^{p^{b-1}}-1)}{p}$, $k=\frac{2|S|-m+\textup{gcd}(m,q^{p^{b-1}}-1)}{p}$ and $d\geq\lceil\frac{m+1-c}{p}\rceil$, where $c$ is the maximum element of set $S$.
\end{theorem}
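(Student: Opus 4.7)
The plan is to apply the Hermitian Construction of Theorem~\ref{thm1} to the classical polynomial code $C(S)$ built in Proposition~\ref{prop3}. Since the hypothesis forces $q=p^{2e}=(p^{e})^{2}$, the Hermitian inner product is available with $l=p^{e}$, and the only substantive thing to check is that $C(S)$ is Hermitian dual-containing; the parameters of the resulting quantum code will then read off directly.

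For the dual-containment, I would invoke Proposition~\ref{propo2} with $l=p^{e}$: the Hermitian dual of $C(S)$ equals $C(R)$, where $R=\mathfrak{A}\setminus\overline{p^{e}S}$. The hypothesis $S\cup\overline{p^{e}S}\supseteq\mathfrak{A}$ rearranges to $\mathfrak{A}\setminus\overline{p^{e}S}\subseteq S$, i.e.\ $R\subseteq S$. Because $S$ and $R$ are both unions of $q$-cyclotomic cosets lying in $\mathfrak{A}$, the assignments $T\mapsto B(T)\mapsto V(T)\mapsto C(T)$ are monotone in $T$, so $R\subseteq S$ gives $C(R)\subseteq C(S)$, which is exactly $C(S)^{\bot H}\subseteq C(S)$.

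For the parameters, Proposition~\ref{prop3} supplies the length $n=\frac{m-\gcd(m,q^{p^{b-1}}-1)}{p}$ and the distance bound $d\geq\lceil\frac{m+1-c}{p}\rceil$. The $\mathbb{F}_{q}$-dimension of $C(S)$ is $|B(S)|$, and by construction each $a\in L$ contributes $s=p^{b-1}$ polynomials $e_{a,k}$, while the cosets $\{C_{a}\}_{a\in L}$ of common cardinality $t=p^{b}$ partition $S$; Lemma~\ref{lemma1}(ii) guarantees these polynomials are $\mathbb{F}_{q}$-linearly independent, so $|B(S)|=s|L|=\frac{|S|}{p}$. Plugging this into Theorem~\ref{thm1} produces a $p^{e}$-ary quantum code of length $n$, dimension $2\cdot\frac{|S|}{p}-n=\frac{2|S|-m+\gcd(m,q^{p^{b-1}}-1)}{p}$, and minimum distance at least $\lceil\frac{m+1-c}{p}\rceil$, matching the statement.

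The whole argument is essentially a bookkeeping assembly of the earlier propositions, with no genuine obstacle. The one place demanding care is the translation from the set-theoretic condition $S\cup\overline{p^{e}S}\supseteq\mathfrak{A}$ to the code-theoretic inclusion $C(S)^{\bot H}\subseteq C(S)$ via Proposition~\ref{propo2}; correctly tracking the factor $l=p^{e}$ inside the Hermitian dual (and not confusing it with the Euclidean dual of Proposition~\ref{prop1}) is the only step that could be mishandled.
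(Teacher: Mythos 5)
Your proposal is correct and follows essentially the same route as the paper: invoke Proposition~\ref{propo2} to identify $C(S)^{\bot H}$ with $C(\mathfrak{A}\setminus\overline{p^{e}S})$, translate the hypothesis $S\cup\overline{p^{e}S}\supseteq\mathfrak{A}$ into the containment $C(S)^{\bot H}\subseteq C(S)$, and then apply Theorem~\ref{thm1} together with the parameters from Proposition~\ref{prop3}. The paper's own proof is a three-line version of exactly this argument; your explicit dimension count $|B(S)|=|S|/p$ is a correct (and welcome) elaboration of a step the paper leaves implicit.
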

\begin{proof}
By Proposition~\ref{propo2}, we have $C(\mathfrak{A}\backslash\overline{p^{e}S})=C(S)^{\bot H}$. If $S\bigcup \overline{p^{e}S}\supseteq\mathfrak{A}$, then $C(S)^{\bot H}\subseteq C(S)$. Applying Theorem~\ref{thm1} and Proposition~\ref{prop3}, the result follows.
\end{proof}

Table~\ref{Nqc} lists some quantum codes obtained from Theorem~\ref{mainthm}, where $\textup{max}(S)$ is the maximum element of set $S$. In order to do comparison in Table~\ref{compa}, we use the propagation rule (Theorem~\ref{propa}) to obtain some of the codes with lengths that are listed in the table online \cite{E12}. Tables~\ref{compa2} and~\ref{compa} show that our quantum codes have larger minimum distance (larger dimension) than the previous quantum codes available when they have the same length and dimension (minimum distance, respectively).

\begin{table}[h]
\begin{center}
\caption{New Quantum Codes}
\begin{tabular}{|c|c|c|c||c|c|c|c|}
\hline
$q$  &  $m$  & $\textup{max}(S)$ &  quantum codes & $q$  &  $m$ & $\textup{max}(S)$ &  quantum codes\\ \hline
$4$  &  $15$  & $9$ & $[[6,0,\geq4]]_{2}$ & $4$  &  $15$  & $13$ & $[[6,4,\geq2]]_{2}$\\ \hline
$4$  &  $255$  & $226$& $[[120,40,\geq15]]_{2}$ & $4$  &  $255$  & $229$& $[[120,48,\geq14]]_{2}$\\ \hline
$4$  &  $255$  & $230$& $[[120,52,\geq13]]_{2}$ & $4$  &  $255$  & $233$& $[[120,60,\geq12]]_{2}$\\ \hline
$4$  &  $255$  & $234$& $[[120,64,\geq11]]_{2}$ & $4$  &  $255$  & $237$& $[[120,72,\geq10]]_{2}$\\ \hline
$4$  &  $255$  & $241$& $[[120,80,\geq8]]_{2}$ & $4$  &  $255$  & $242$& $[[120,84,\geq7]]_{2}$\\ \hline
$4$  &  $255$  & $245$& $[[120,92,\geq6]]_{2}$ & $4$  &  $255$  & $246$& $[[120,96,\geq5]]_{2}$\\ \hline
$4$  &  $255$  & $249$& $[[120,104,\geq4]]_{2}$ & $4$  &  $255$  & $250$& $[[120,108,\geq3]]_{2}$\\ \hline
$4$  &  $255$  & $253$& $[[120,116,\geq2]]_{2}$ & $9$  &  $104$  & $90$& $[[32,12,\geq5]]_{3}$\\ \hline
$9$  &  $104$  & $94$& $[[32,16,\geq4]]_{3}$ & $9$  &  $104$  & $98$& $[[32,22,\geq3]]_{3}$\\ \hline
$9$  &  $104$  & $101$& $[[32,28,\geq2]]_{3}$ & $9$  &  $728$  & $704$& $[[240,198,\geq9]]_{3}$\\ \hline
$9$  &  $728$  & $707$& $[[240,204,\geq8]]_{3}$ & $9$  &  $728$  & $709$& $[[240,208,\geq7]]_{3}$\\ \hline
$9$  &  $728$  & $713$& $[[240,214,\geq6]]_{3}$ & $9$  &  $728$  & $716$& $[[240,220,\geq5]]_{3}$\\ \hline
$9$  &  $728$  & $718$& $[[240,224,\geq4]]_{3}$ & $9$  &  $728$  & $722$& $[[240,230,\geq3]]_{3}$\\ \hline
$9$  &  $728$  &  $725$&$[[240,236,\geq2]]_{3}$ & $16$  &  $85$  & $67$& $[[40,12,\geq10]]_{4}$\\ \hline
$16$  &  $85$  & $71$& $[[40,16,\geq8]]_{4}$ & $16$  &  $85$  & $73$& $[[40,20,\geq7]]_{4}$\\ \hline
$16$  &  $85$  & $75$& $[[40,22,\geq6]]_{4}$ & $16$  &  $85$  & $77$& $[[40,26,\geq5]]_{4}$\\ \hline
$16$  &  $85$  & $79$& $[[40,30,\geq4]]_{4}$ & $16$  &  $85$  & $81$& $[[40,34,\geq3]]_{4}$\\ \hline
$16$  &  $85$  & $83$& $[[40,38,\geq2]]_{4}$& $16$  &  $255$  & $203$& $[[120,36,\geq27]]_{4}$\\ \hline
$16$  &  $255$  & $209$& $[[120,40,\geq24]]_{4}$ & $16$  &  $255$  & $211$& $[[120,44,\geq23]]_{4}$\\ \hline
$16$  &  $255$  & $213$& $[[120,48,\geq22]]_{4}$ & $16$  &  $255$  & $215$& $[[120,52,\geq21]]_{4}$\\ \hline
$16$  &  $255$  & $217$& $[[120,56,\geq20]]_{4}$ & $16$  &  $255$  & $219$& $[[120,60,\geq19]]_{4}$\\ \hline
$16$  &  $255$  & $220$& $[[120,62,\geq18]]_{4}$ & $16$  &  $255$  & $225$& $[[120,66,\geq16]]_{4}$\\ \hline
$16$  &  $255$  &  $227$&$[[120,70,\geq15]]_{4}$ & $16$  &  $255$  & $229$& $[[120,74,\geq14]]_{4}$\\ \hline
$16$  &  $255$  & $231$& $[[120,78,\geq13]]_{4}$ & $16$  &  $255$  & $233$& $[[120,82,\geq12]]_{4}$\\ \hline
$16$  &  $255$  & $235$& $[[120,86,\geq11]]_{4}$ & $16$  &  $255$  & $237$& $[[120,90,\geq10]]_{4}$\\ \hline
$16$  &  $255$  & $241$& $[[120,94,\geq8]]_{4}$ & $16$  &  $255$  & $243$& $[[120,98,\geq7]]_{4}$\\ \hline
$16$  &  $255$  & $245$& $[[120,102,\geq6]]_{4}$ & $16$  &  $255$  & $247$& $[[120,106,\geq5]]_{4}$\\ \hline
$16$  &  $255$  & $249$& $[[120,110,\geq4]]_{4}$ & $16$  &  $255$  & $251$& $[[120,114,\geq3]]_{4}$\\ \hline
$16$  &  $255$  & $253$& $[[120,118,\geq2]]_{4}$ &   &    &  & \\ \hline
\end{tabular}
\label{Nqc}
\end{center}
\end{table}

\begin{table}[h]
\begin{center}
\caption{Quantum Codes Comparison}
\begin{tabular}{|c|c|}
\hline
quantum codes from Table~\ref{Nqc}  &  quantum codes from \cite{G07}\\ \hline
$[[120,40,\geq15]]_{2}$ & $[[120,40,14]]_{2}$ \\ \hline
$[[120,48,\geq14]]_{2}$ & $[[120,48,13]]_{2}$ \\ \hline
$[[120,52,\geq13]]_{2}$ & $[[120,52,12]]_{2}$ \\ \hline
$[[120,60,\geq12]]_{2}$ & $[[120,60,11]]_{2}$ \\ \hline
$[[120,64,\geq11]]_{2}$ & $[[120,64,10]]_{2}$ \\ \hline
$[[120,72,\geq10]]_{2}$ & $[[120,72,9]]_{2}$ \\ \hline
\end{tabular}
\label{compa2}
\end{center}
\end{table}

\begin{table}[h]
\begin{center}
\caption{Quantum Codes Comparison}
\begin{tabular}{|c|c|c|}
\hline
quantum codes from Table~\ref{Nqc}  & using propagation rule  &  quantum codes from \cite{E12}\\ \hline
$[[240,220,\geq5]]_{3}$ &  $[[238,220,\geq3]]_{3}$  &  $[[238,216,3]]_{3}$ \\ \hline
$[[40,16,\geq8]]_{4}$ &                           &  $[[40,2,8]]_{4}$ \\ \hline
$[[40,20,\geq7]]_{4}$ &                           &  $[[40,8,7]]_{4}$ \\ \hline
$[[40,22,\geq6]]_{4}$ &                           &  $[[40,14,6]]_{4}$ \\ \hline
$[[40,26,\geq5]]_{4}$ &                           &  $[[40,20,5]]_{4}$ \\ \hline
$[[40,30,\geq4]]_{4}$ &                           &  $[[40,26,4]]_{4}$ \\ \hline
$[[40,34,\geq3]]_{4}$ &                           &  $[[40,32,3]]_{4}$ \\ \hline
$[[120,52,\geq21]]_{4}$ &  $[[117,52,\geq18]]_{4}$  &  $[[117,49,14]]_{4}$ \\ \hline
$[[120,56,\geq20]]_{4}$ &  $[[117,56,\geq17]]_{4}$  &  $[[117,49,14]]_{4}$ \\ \hline
$[[120,60,\geq19]]_{4}$ &  $[[117,60,\geq16]]_{4}$  &  $[[117,49,14]]_{4}$ \\ \hline
$[[120,62,\geq18]]_{4}$ &  $[[117,62,\geq15]]_{4}$  &  $[[117,49,14]]_{4}$ \\ \hline
\end{tabular}
\label{compa}
\end{center}
\end{table}

\section{New Quantum Synchronizable Codes from Duadic Codes}\label{quantumsyn}
In this section we study a special class of cyclic codes to give a new family of quantum synchronizable codes.
\subsection{Duadic Codes}
In this subsection, we recall the definition and basic properties of duadic codes of length $n$ over a finite field $\mathbb{F}_{q}$ such that $\textup{gcd}(n,q)=1$.

Let $S_{0}$, $S_{1}$ be the defining sets of two cyclic codes of length $n$ over $\mathbb{F}_{q}$ such that
\begin{enumerate}
  \item $S_{0}\bigcap S_{1}=\emptyset,$
  \item $S_{0}\bigcup S_{1}=S=\{1,2,...,n-1\},$ and
  \item $aS_{i}\pmod{n}=S_{i+1\pmod{2}}$ for some $a$ coprime to $n$.
\end{enumerate}
In particular, each $S_{i}$ is a union of $q$-ary cyclotomic cosets modulo $n$. Then $|S_{0}|=|S_{1}|$, we have $|S_{i}|=\frac{n-1}{2}$. Hence $n$ must be odd.

Let $\omega$ be a primitive $n$-th root of unity over a field $\mathbb{F}_{q}$. For $i\in\{0,1\}$, the odd-like duadic code $D_{i}$ is a cyclic code of length $n$ over $\mathbb{F}_{q}$ with defining set $S_{i}$ and generator polynomial
$$g_{i}(x)=\prod_{j\in S_{i}}(x-\omega^{j}).$$
The even-like duadic code $C_{i}$ is defined as  a cyclic code with defining set $S_{i}\bigcup\{0\}$ and generator polynomial $(x-1)g_{i}(x)$. Then the dimension of $D_{i}$ is $(n+1)/2$ and dimension of $C_{i}$ is $(n-1)/2$ respectively. Obviously $C_{i}\subseteq D_{i}$.
\begin{lemma}\rm{\cite{HP03}}\label{dud1}
Let $C_{i}$ and $D_{i}$ be the even-like and odd-like duadic codes of length $n$ over $\mathbb{F}_{q}$, where $i\in\{0,1\}$. Then
\begin{enumerate}
  \item $C_{i}^{\bot}=D_{i}$ if and only if $-S_{i}\equiv S_{i+1\pmod{2}}\pmod{n}$.
  \item $C_{i}^{\bot}=D_{i+1\pmod{2}}$ if and only if $-S_{i}\equiv S_{i}\pmod{n}$.
\end{enumerate}
\end{lemma}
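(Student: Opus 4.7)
The plan is to compute the defining set of $C_{i}^{\bot}$ explicitly and then check when it coincides with the defining set of $D_{i}$ or of $D_{i+1\pmod{2}}$. The key tool I would invoke is the standard description of the dual of a cyclic code: if $C$ is a cyclic code of length $n$ over $\mathbb{F}_{q}$ with defining set $T\subseteq\mathbb{Z}_{n}$, then its Euclidean dual $C^{\bot}$ is the cyclic code whose defining set is $\mathbb{Z}_{n}\setminus(-T)$, where $-T=\{-t\bmod n\mid t\in T\}$. This follows because the check polynomial of $C$ has roots $\{\omega^{j}\mid j\notin T\}$, and the generator polynomial of $C^{\bot}$ is, up to a unit, the reciprocal of this check polynomial, which replaces each root by its inverse.

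First I would apply this to $C_{i}$, whose defining set is $S_{i}\cup\{0\}$. By the partition $S_{0}\sqcup S_{1}\sqcup\{0\}=\mathbb{Z}_{n}$, the complement in $\mathbb{Z}_{n}$ is exactly $S_{i+1\pmod{2}}$, so the defining set of $C_{i}^{\bot}$ equals $-S_{i+1\pmod{2}}$. By construction, $D_{j}$ has defining set $S_{j}$.

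For part~(1), I would observe that $C_{i}^{\bot}=D_{i}$ holds if and only if $-S_{i+1\pmod{2}}=S_{i}$ as subsets of $\mathbb{Z}_{n}$; multiplying both sides by $-1$, this is equivalent to $-S_{i}\equiv S_{i+1\pmod{2}}\pmod{n}$, which is the stated criterion. For part~(2), $C_{i}^{\bot}=D_{i+1\pmod{2}}$ holds if and only if $-S_{i+1\pmod{2}}=S_{i+1\pmod{2}}$. To translate this back to a condition on $S_{i}$, I would use that negation preserves $\{1,\dots,n-1\}=S_{0}\sqcup S_{1}$; then $-S_{i+1\pmod{2}}=S_{i+1\pmod{2}}$ forces $-S_{i}\subseteq\{1,\dots,n-1\}\setminus S_{i+1\pmod{2}}=S_{i}$, and equality follows by comparing cardinalities. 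The reverse implication is symmetric, yielding the equivalence with $-S_{i}\equiv S_{i}\pmod{n}$.

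I do not anticipate a serious obstacle: the whole argument hinges on the single classical identification of the defining set of the dual of a cyclic code, and the remainder is bookkeeping with the partition $\mathbb{Z}_{n}=\{0\}\sqcup S_{0}\sqcup S_{1}$. The only place requiring a moment of care is converting the set equalities into the stated $\equiv\pmod{n}$ conditions and keeping the indices modulo~$2$ straight, particularly in part~(2) where one must pass from a fixed-set statement about $S_{i+1\pmod{2}}$ to the same statement about $S_{i}$.
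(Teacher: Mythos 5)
Your argument is correct: the identification of the defining set of $C_{i}^{\bot}$ as $-\bigl(\mathbb{Z}_{n}\setminus(S_{i}\cup\{0\})\bigr)=-S_{i+1\pmod{2}}$ is the standard fact about duals of cyclic codes, and the bookkeeping with the partition $\mathbb{Z}_{n}=\{0\}\sqcup S_{0}\sqcup S_{1}$ (including the cardinality argument in part~(2)) is sound. The paper gives no proof of this lemma --- it is quoted from Huffman and Pless \cite{HP03} --- and your derivation is essentially the textbook proof found there, so there is nothing further to reconcile.
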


The following well known fact gives a lower bound for the minimum distance of odd-like duadic codes.

\begin{lemma}\rm{\cite{HP03}}\label{dud2}
Let $D_{0}$ and $D_{1}$ be a pair of odd-like duadic codes of length $n$ over $\mathbb{F}_{q}$. If $-S_{i}\equiv S_{i+1\pmod{2}}\pmod{n}$, then their minimum distances in both codes are the same, say $d_{0}$. We also have $d_{0}^{2}-d_{0}+1\geq n$.
\end{lemma}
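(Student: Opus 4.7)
The plan is to prove both parts by exploiting the multiplier map $\mu_{-1}\colon x\mapsto x^{-1}$ on $\mathbb{F}_q[x]/(x^n-1)$, together with a careful analysis of the ideal $D_0\cap D_1$.

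For the first assertion (equal minimum distances), I would observe that $\mu_{-1}$ is a weight-preserving $\mathbb{F}_q$-linear isomorphism on $\mathbb{F}_q[x]/(x^n-1)$: a codeword $c(x)$ with support $\{i_1,\ldots,i_w\}$ is sent to $c(x^{-1})$ whose support is $\{-i_1,\ldots,-i_w\}\pmod n$, of the same size. Under $\mu_{-1}$ the defining set of $D_0$ becomes $-S_0\equiv S_1\pmod n$, so $\mu_{-1}(D_0)=D_1$. Hence $d(D_0)=d(D_1)=:d_0$.

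For the bound $d_0^2-d_0+1\geq n$, my plan is to pick a minimum-weight codeword $c(x)\in D_0$ and form the product $c(x)c(x^{-1})$ in $R:=\mathbb{F}_q[x]/(x^n-1)$. Since $D_0$ and $D_1$ are ideals, and $c(x)\in D_0$ while $c(x^{-1})\in D_1$ by the first step, this product lies in the ideal $D_0\cap D_1$. Now the defining set of $D_0\cap D_1$ is $S_0\cup S_1=\{1,2,\ldots,n-1\}$, so $D_0\cap D_1$ is the one-dimensional repetition ideal spanned by $j(x):=1+x+\cdots+x^{n-1}$. Next I would show $c(x)c(x^{-1})$ is nonzero: because $D_0$ is odd-like, $c(x)$ is not divisible by $x-1$, i.e.\ $c(1)\neq 0$; evaluating the product at $x=1$ gives $c(1)^2\neq 0$, so $c(x)c(x^{-1})=\alpha j(x)$ with $\alpha\neq 0$, a vector of weight exactly $n$.

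Finally I would bound the Hamming weight of $c(x)c(x^{-1})$ from above combinatorially. Writing $T:=\mathrm{supp}(c)$, the $x^k$-coefficient of $c(x)c(x^{-1})$ equals $\sum_{i-j\equiv k\,(\mathrm{mod}\,n)}c_ic_j$, so the support of the product is contained in the difference set $T-T\pmod n$. This set has cardinality at most $|T|(|T|-1)+1=d_0^2-d_0+1$ (the $+1$ accounts for $k=0$, and there are at most $d_0(d_0-1)$ nonzero differences). Combining with the lower bound of $n$ from the previous paragraph yields $n\leq d_0^2-d_0+1$, as required. The only delicate step is verifying that the product is genuinely nonzero; this is where the odd-like hypothesis (guaranteeing $c(1)\neq 0$) is essential, and it is the part I would double-check most carefully.
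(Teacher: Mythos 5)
The paper offers no proof of this lemma --- it is quoted from [HP03] --- so the only issue is whether your argument is sound. Most of it is: the first half ($\mu_{-1}$ is weight-preserving and carries the cyclic code with defining set $S_{0}$ to the one with defining set $-S_{0}\equiv S_{1}$, hence $d(D_{0})=d(D_{1})$) is correct, and the architecture of the second half (form $c(x)c(x^{-1})$, observe it lies in $D_{0}\cap D_{1}=\langle 1+x+\cdots+x^{n-1}\rangle$, and compare the weight $n$ of a nonzero multiple of $j(x)$ with the at most $d_{0}(d_{0}-1)+1$ elements of the difference set of the support) is the standard and correct route to the square-root bound.

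The gap is exactly at the step you flagged: the claim that a \emph{minimum-weight} codeword $c(x)\in D_{0}$ satisfies $c(1)\neq 0$ ``because $D_{0}$ is odd-like.'' Odd-like is a property of the code, not of its individual codewords: it means $0\notin S_{0}$, so that $D_{0}$ is not contained in the even-like hyperplane; it does not mean every codeword is odd-like. Indeed $D_{0}$ contains the even-like subcode $C_{0}=\langle (x-1)g_{0}(x)\rangle$ of codimension $1$, and nothing in the hypotheses prevents the minimum weight of $D_{0}$ from being attained inside $C_{0}$. For such a codeword $c(1)=0$, so $c(x)c(x^{-1})$ vanishes at every $n$-th root of unity including $1$ and is therefore $0$, and your weight comparison yields nothing. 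What your argument actually proves is $d_{o}^{2}-d_{o}+1\geq n$ for the minimum \emph{odd-like} weight $d_{o}$ (the minimum over codewords with $c(1)\neq 0$), which is precisely how the bound is stated in [HP03, Thm.~6.5.2]. Passing from $d_{o}$ to the true minimum distance requires an extra ingredient (e.g.\ transitivity of the automorphism group of the extended code, or the equality case $d_{o}^{2}-d_{o}+1=n$ with $d_{o}>2$) that the lemma's hypotheses do not supply; this is arguably a slight overstatement in the paper's own formulation, but as written your proof does not close it.
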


\subsection{New Quantum Synchronizable Codes}
We now provide a construction for quantum synchronizable codes designed from duadic codes. We first recall the following lemma, which gives a connection between cyclotomic cosets and irreducible polynomials.
\begin{lemma}\rm{\cite{HP03}}\label{cyc}
Let $\omega$ be a primitive $n$-th root of unity over a field containing $\mathbb{F}_{q}$, where $\textup{gcd}(q,n)=1$. Then the minimal polynomial of $\omega^{i}$ with respect to $\mathbb{F}_{q}$ is
$$M^{(i)}(x)=\prod_{j\in C_{i,n}}(x-\omega^{j}),$$
where $C_{i,n}$ is the unique $q$-cyclotomic coset modulo $n$ containing $i$.
\end{lemma}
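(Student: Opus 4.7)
The plan is to identify the minimal polynomial of $\omega^{i}$ over $\mathbb{F}_{q}$ with the product $M^{(i)}(x)=\prod_{j\in C_{i,n}}(x-\omega^{j})$ by a standard Galois-theoretic argument based on the Frobenius endomorphism. Let $m=|C_{i,n}|$, so that $m$ is the multiplicative order of $q$ modulo $n/\gcd(i,n)$ and $\mathbb{F}_{q^{m}}$ is the smallest extension of $\mathbb{F}_{q}$ containing $\omega^{i}$. The Galois group $\operatorname{Gal}(\mathbb{F}_{q^{m}}/\mathbb{F}_{q})$ is cyclic of order $m$, generated by the Frobenius $\sigma:x\mapsto x^{q}$.

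First I would verify that $M^{(i)}(x)\in\mathbb{F}_{q}[x]$. Since $C_{i,n}=\{iq^{\ell}\bmod n:0\leq \ell\leq m-1\}$, applying $\sigma$ to $M^{(i)}(x)$ permutes its roots: indeed, $\sigma(\omega^{j})=\omega^{jq}$, and $j\mapsto jq\pmod{n}$ is a bijection of $C_{i,n}$ onto itself. Hence $\sigma$ fixes $M^{(i)}(x)$, so its coefficients lie in $\mathbb{F}_{q}$.

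Next I would show $M^{(i)}(x)$ is irreducible over $\mathbb{F}_{q}$ by checking that no proper divisor of $m$ is achieved as the degree of a polynomial in $\mathbb{F}_{q}[x]$ vanishing at $\omega^{i}$. If $h(x)\in\mathbb{F}_{q}[x]$ satisfies $h(\omega^{i})=0$, then applying $\sigma$ repeatedly gives $h(\omega^{iq^{\ell}})=\sigma^{\ell}(h(\omega^{i}))=0$ for every $\ell$, so $h(x)$ has $\omega^{j}$ as a root for every $j\in C_{i,n}$. Thus every such $h(x)$ is divisible by $\prod_{j\in C_{i,n}}(x-\omega^{j})=M^{(i)}(x)$, which both forces $M^{(i)}(x)$ to be irreducible and shows it is the minimal polynomial of $\omega^{i}$.

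There is no real obstacle here; the only small subtlety is ensuring that the elements $\{\omega^{j}:j\in C_{i,n}\}$ are pairwise distinct, which follows because $\omega$ has exact order $n$ and the representatives of $C_{i,n}$ are taken in $\{0,1,\ldots,n-1\}$, so distinct residues modulo $n$ yield distinct powers of $\omega$.
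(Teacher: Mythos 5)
Your proof is correct and is the standard Frobenius-orbit argument for this fact; the paper itself gives no proof, citing it directly from Huffman--Pless \cite{HP03}, where essentially this same argument appears. All the key points are in place: closure of $C_{i,n}$ under multiplication by $q$ gives $M^{(i)}(x)\in\mathbb{F}_{q}[x]$, the divisibility argument gives minimality and irreducibility, and you correctly flag the distinctness of the roots $\omega^{j}$.
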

 Now we assume that $p$ is an odd prime and $p\equiv-1\pmod{8}$. Let $m,n$ be positive integers, $m\leq n$, $g$ be a primitive root modulo $p^{n}$, and $\omega$ be a primitive $p^{n}$-th root of unity over a field containing $\mathbb{F}_{2}$. Define
$$g_{m}=g\pmod{p^{m}},\ \omega_{m}=\omega^{p^{n-m}}.$$
Then $g_{m}$ is a primitive root modulo $p^{m}$ and $\omega_{m}$ is a primitive $p^{m}$-th root of unity.

Set
$$S_{m0}=\langle g_{m}^{2}\rangle,\ S_{m1}=g_{m}S_{m0},$$
where $\langle g_{m}^{2}\rangle$ denotes the subgroup generated by $g_{m}^{2}$ of $\mathbb{Z}_{p^{m}}^{*}$. It is obvious that
$$S_{m0}\bigcup S_{m1}=\mathbb{Z}_{p^{m}}^{*},\ S_{m0}\bigcap S_{m1}=\emptyset.$$
We also define polynomials corresponding to $S_{mj},\ j=0,1$:
$$d_{mj}(x)=\prod_{i\in S_{mj}}(x-\omega_{m}^{i}),\ j=0,1.$$
We can also verify that
$$x^{p^{n}}-1=(x-1)(\prod_{m=1}^{n}d_{m0}(x))(\prod_{m=1}^{n}d_{m1}(x)).$$
Since $p\equiv-1\pmod{8}$, then $2\in S_{m0}$, whence $d_{m0}(x)\in\mathbb{F}_{2}[x]$ from Lemma~\ref{cyc}. Let
$$g_{m}(x)=d_{10}(x)d_{20}(x)\cdots d_{m0}(x),$$
and $C_{m}$ denote the cyclic code over $\mathbb{F}_{2}$ of length $p^{n}$ generated by the polynomial $g_{m}(x)$.

\begin{lemma}
$\textup{ord}(d_{n0}(x))=p^{n}$.
\end{lemma}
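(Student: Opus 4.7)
The plan is to combine Lemmas~\ref{ord1} and~\ref{ord2}: factor $d_{n0}(x)$ over $\mathbb{F}_{2}$ into distinct irreducible polynomials, compute the order of each from the multiplicative order of its roots, and take the least common multiple.

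First I would check that $d_{n0}(x)\in\mathbb{F}_{2}[x]$ and identify its irreducible factors. Because $p\equiv -1\pmod{8}$, the integer $2$ is a quadratic residue modulo $p$, and by Hensel lifting it is a quadratic residue modulo $p^{n}$ as well. Since $S_{n0}=\langle g_{n}^{2}\rangle$ is precisely the subgroup of squares in the cyclic group $\mathbb{Z}_{p^{n}}^{*}$, this gives $2\in S_{n0}$, and so $2S_{n0}=S_{n0}$. The root set $\{\omega^{i}:i\in S_{n0}\}$ of $d_{n0}$ is therefore closed under the Frobenius map $x\mapsto x^{2}$, which forces $d_{n0}(x)\in\mathbb{F}_{2}[x]$. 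Grouping the exponents $S_{n0}$ into $2$-cyclotomic cosets modulo $p^{n}$ then yields, via Lemma~\ref{cyc}, a factorization
$$d_{n0}(x)=\prod_{t}M^{(t)}(x)$$
into distinct (hence pairwise coprime) irreducible polynomials $M^{(t)}(x)\in\mathbb{F}_{2}[x]$.

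The key step is the computation of $\textup{ord}(M^{(t)}(x))$. By definition $S_{n0}\subseteq\mathbb{Z}_{p^{n}}^{*}$, so every $i\in S_{n0}$ is coprime to $p^{n}$; hence every root $\omega^{i}$ of $d_{n0}(x)$ has multiplicative order exactly $p^{n}/\gcd(i,p^{n})=p^{n}$. Lemma~\ref{ord1} then gives $\textup{ord}(M^{(t)}(x))=p^{n}$ for every irreducible factor, and Lemma~\ref{ord2} finally yields
$$\textup{ord}(d_{n0}(x))=\textup{lcm}_{t}\textup{ord}(M^{(t)}(x))=p^{n},$$
as required.

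The only nontrivial input is really the quadratic-residue fact $2\in S_{n0}$, which is exactly where the hypothesis $p\equiv-1\pmod{8}$ enters; once this is in hand, both lemmas apply essentially mechanically, because every exponent appearing in $S_{n0}$ is automatically a unit modulo $p^{n}$ and so contributes a root of maximal order $p^{n}$.
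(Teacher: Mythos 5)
Your proof is correct and follows essentially the same route as the paper: factor $d_{n0}(x)$ into distinct irreducible polynomials over $\mathbb{F}_{2}$, apply Lemma~\ref{ord1} to each factor (every root $\omega^{i}$ with $i\in S_{n0}\subseteq\mathbb{Z}_{p^{n}}^{*}$ has multiplicative order exactly $p^{n}$), and conclude via the lcm in Lemma~\ref{ord2}. You supply somewhat more detail than the paper (which only notes that all roots are $p^{n}$-th roots of unity and that $\omega_{n}$ itself is a root), but the underlying argument is the same.
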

\begin{proof}
Since all the roots of $d_{n0}(x)$ are $p^{n}$-th root of unity, and $d_{n0}(\omega_{n})=0$, where $\omega_{n}$ is a primitive $p^{n}$-th root of unity. Then by Lemmas~\ref{ord1} and \ref{ord2}, $\textup{ord}(d_{n0}(x))=p^{n}$.
\end{proof}

\begin{theorem}
Let $n>1$ be a positive integer, $p$ be an odd prime and $p\equiv-1\pmod{8}$. Then for pair $a_{l},\ a_{r}$ of nonnegative integers such that $a_{l}+a_{r}<p^{n}$, there exists a quantum synchronizable $(a_{l},a_{r})-[[p^{n}+a_{l}+a_{r},1]]_{2}$ code that corrects at least up to $\lfloor\frac{d_{n}-1}{2}\rfloor$ phase errors, where $d_{n}^{2}-d_{n}+1\geq p^{n}$.
\end{theorem}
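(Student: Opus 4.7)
The plan is to apply Theorem~\ref{syn} with $C=C_{n}$ and $D=C_{n-1}$, which are already in hand. With this choice the quotient polynomial is $f(x)=g_{n}(x)/g_{n-1}(x)=d_{n0}(x)$, whose order is $p^{n}$ by the preceding lemma, so the condition $a_{l}+a_{r}<\textup{ord}(f(x))$ of Theorem~\ref{syn} becomes precisely $a_{l}+a_{r}<p^{n}$. The containment $C_{n}\subseteq C_{n-1}$ is immediate from $g_{n-1}(x)\mid g_{n}(x)$.

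The substantive step is to identify $C_{n}$ as an odd-like duadic code of length $p^{n}$ over $\mathbb{F}_{2}$ and to show it is Euclidean dual-containing. Each nonzero $k\in\mathbb{Z}_{p^{n}}$ can be uniquely written as $k=p^{n-m}i$ with $1\leq m\leq n$ and $i\in\mathbb{Z}_{p^{m}}^{*}$; then $k$ lies in the defining set of $d_{m0}(x)$ (with respect to $\omega$) exactly when $i\in S_{m0}$. Thus the defining sets of $\prod_{m=1}^{n}d_{m0}(x)$ and $\prod_{m=1}^{n}d_{m1}(x)$ partition $\{1,2,\dots,p^{n}-1\}$. Because $g$ is a primitive root modulo $p^{n}$, its reduction $g_{m}$ is a primitive root modulo each $p^{m}$, and in particular a non-square in every $\mathbb{Z}_{p^{m}}^{*}$, so multiplication by $g$ sends $S_{m0}$ to $S_{m1}$ for every $m$. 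Consequently, the duadic conditions hold with the splitting given by multiplication by $g$, and $g_{n}(x)=\prod_{m=1}^{n}d_{m0}(x)$ is the generator polynomial of an odd-like duadic code $D_{0}=C_{n}$.

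To invoke Lemma~\ref{dud1}(1) I would verify that $-1$ interchanges the defining sets. The hypothesis $p\equiv-1\pmod{8}$ implies $p\equiv 3\pmod{4}$, so $-1$ is a non-square modulo $p$ and hence modulo every $p^{m}$. Therefore $-S_{m0}=S_{m1}$ for all $m$, giving $-S_{0}\equiv S_{1}\pmod{p^{n}}$ at the global level. Lemma~\ref{dud1}(1) then yields $C_{n}^{\bot}\subseteq C_{n}$, and Lemma~\ref{dud2} guarantees that the minimum distance $d_{n}$ of $C_{n}$ satisfies $d_{n}^{2}-d_{n}+1\geq p^{n}$. Since $\dim C_{n}=(p^{n}+1)/2$, the Theorem~\ref{syn} dimension formula gives $2k_{1}-p^{n}=1$, and feeding $C=C_{n}$, $D=C_{n-1}$, $f(x)=d_{n0}(x)$ into Theorem~\ref{syn} produces the desired $(a_{l},a_{r})\text{-}[[p^{n}+a_{l}+a_{r},1]]_{2}$ code correcting at least $\lfloor(d_{n}-1)/2\rfloor$ phase errors.

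The main obstacle is the bookkeeping in the first two steps: one has to carefully track how the irreducible factors $d_{m0}(x)$ at different levels $m$ aggregate into a single duadic splitting at length $p^{n}$, and then separately verify both the multiplier condition (using that $g$ descends to a primitive root mod $p^{m}$) and the $-1$ condition (using $p\equiv 3\pmod{4}$). Once this nested structure is unpacked cleanly, the remainder of the argument is a routine application of Theorem~\ref{syn} together with Lemmas~\ref{dud1} and~\ref{dud2}.
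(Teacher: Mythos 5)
Your proposal is correct and follows essentially the same route as the paper: identify $C_{n}$ as an odd-like duadic code of length $p^{n}$, use $p\equiv-1\pmod{8}$ to get $-S_{0}\equiv S_{1}$ and hence $C_{n}^{\bot}\subseteq C_{n}$ via Lemma~\ref{dud1}, bound $d_{n}$ by Lemma~\ref{dud2}, and feed $C=C_{n}$, $D=C_{m}$ ($m<n$) with $\textup{ord}(f(x))=p^{n}$ into Theorem~\ref{syn}. You supply more detail than the paper (the explicit decomposition of the defining set across levels $m$ and the verification of the multiplier and $-1$ conditions), but the argument is the same.
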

\begin{proof}
Take the code $C_{n}$ and $C_{m}$, where $m<n$, then $C_{n}\subseteq C_{m}$. By Lemma~\ref{dud2}, $C_{n}$ is an odd-like duadic code with parameters $[p^{n},\frac{p^{n}+1}{2},d_{n}]$, where $d_{n}^{2}-d_{n}+1\geq p^{n}$. Since $p\equiv-1\pmod{8}$, we have $S_{n0}=-S_{n1}$. By Lemma~\ref{dud1}, $C_{n}\supseteq C_{n}^{\bot}$. Applying Theorem~\ref{syn}, the assertion follows.
\end{proof}

In order to get more results, we discuss the factorization of $d_{n0}(x)$.
\begin{lemma}\label{cyc1}
Let $n>1$ be a positive integer, $p$ be an odd prime and $p\equiv-1\pmod{8}$. If $\textup{ord}_{p^{n}}(2)=t$, then $d_{n0}(x)$ can be factorized into $\frac{p^{n-1}(p-1)}{2t}$ irreducible polynomials of degree $t$ over $\mathbb{F}_{2}$.
\end{lemma}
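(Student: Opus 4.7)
The plan is to count the irreducible factors of $d_{n0}(x)$ by interpreting them through $2$-cyclotomic cosets modulo $p^n$, via Lemma~\ref{cyc}. Since the roots of $d_{n0}(x)$ are exactly $\{\omega_n^i : i \in S_{n0}\}$ and $\omega_n$ is a primitive $p^n$-th root of unity, the irreducible factors of $d_{n0}(x)$ over $\mathbb{F}_2$ correspond bijectively to the $2$-cyclotomic cosets modulo $p^n$ that are contained in $S_{n0}$, with each factor having degree equal to the size of the corresponding coset.

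First I would verify the counting setup. The subgroup $S_{n0} = \langle g_n^2\rangle$ has index $2$ in $\mathbb{Z}_{p^n}^*$, so $|S_{n0}| = \frac{p^{n-1}(p-1)}{2}$, which matches the target dimension count once I know every coset has size exactly $t$. Because $S_{n0}\subseteq\mathbb{Z}_{p^n}^*$, for every $i\in S_{n0}$ the $2$-cyclotomic coset $C_{i,p^n}$ has size equal to the smallest $k$ with $2^k\equiv1\pmod{p^n}$, which is precisely $t=\textup{ord}_{p^n}(2)$.

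The crucial point is to confirm that these cosets lie entirely inside $S_{n0}$ (so that no coset is split between $S_{n0}$ and $S_{n1}$). Here I would use the hypothesis $p\equiv-1\pmod 8$. Since $\mathbb{Z}_{p^n}^*$ is cyclic of order $p^{n-1}(p-1)$ and $p$ is odd, $S_{n0}$ is exactly the unique index-$2$ subgroup, i.e., the subgroup of quadratic residues modulo $p^n$. The reduction map $\mathbb{Z}_{p^n}^*\to\mathbb{Z}_p^*$ descends to an isomorphism between the two squaring quotients (both of order $2$), so $2$ is a QR modulo $p^n$ iff it is a QR modulo $p$; and the second supplement to quadratic reciprocity gives $2$ a QR modulo $p$ exactly when $p\equiv\pm1\pmod 8$. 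Thus $2\in S_{n0}$, and multiplication by $2$ stabilizes $S_{n0}$, so each $2$-cyclotomic coset of an element of $S_{n0}$ stays inside $S_{n0}$.

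Putting this together, $S_{n0}$ is partitioned into $2$-cyclotomic cosets modulo $p^n$, each of size $t$, so the number of cosets is $\frac{p^{n-1}(p-1)}{2t}$, and by Lemma~\ref{cyc} each contributes an irreducible factor of degree $t$ to $d_{n0}(x)$. The only slightly delicate step is the quadratic-residue argument that $2\in S_{n0}$; everything else is bookkeeping with cyclotomic cosets. I would phrase the QR step briefly and cite the second supplementary law rather than reprove it.
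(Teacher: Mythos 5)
Your proposal is correct and follows essentially the same route as the paper: partition $S_{n0}$ into $2$-cyclotomic cosets modulo $p^{n}$, note each has size $t=\textup{ord}_{p^{n}}(2)$ since its elements are units, and apply Lemma~\ref{cyc} to count $\frac{p^{n-1}(p-1)}{2t}$ irreducible factors of degree $t$. The quadratic-residue argument you supply for $2\in S_{n0}$ is a detail the paper states (without proof) in the setup just before the lemma, so your write-up is simply a more self-contained version of the same proof.
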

\begin{proof}
Let $S_{i,p^{n}}=\{i\cdot 2^{j}\pmod{p^{n}}|j=0,1,\ldots\}$ denoting the $2$-cyclotomic coset modulo $p^{n}$ containing $i$. If $\textup{ord}_{p^{n}}(2)=t$, then $t$ is the minimal integer such that $p^{n}|2^{t}-1$, so $|S_{i,p^{n}}|=t$ for any $\textup{gcd}(i,p^{n})=1$. Since all the roots of $d_{n0}(x)$ are with the form $\omega^{j}$, where $\textup{gcd}(j,p^{n})=1$. Then by Lemma~\ref{cyc}, $d_{n0}(x)$ can be factorized into $\frac{p^{n-1}(p-1)}{2t}$ irreducible polynomials of degree $t$ over $\mathbb{F}_{2}$.
\end{proof}

\begin{theorem}\label{thm2}
Let $n\geq1$ be a positive integer, $p$ be an odd prime and $p\equiv-1\pmod{8}$. If $\textup{ord}_{p^{m}}(2)=t_{m},\ m=1,2,\cdots n$. Then for nonnegative integers $a_{l},\ a_{r}$, $u_{m}\ (m=1,2,\cdots,n)$ such that $a_{l}+a_{r}<p^{n}$, $1\leq u_{n}\leq\frac{p^{n-1}(p-1)}{2t_{n}}, 0\leq u_{l}\leq\frac{p^{l-1}(p-1)}{2t_{l}}\textup{ for }l=1,2,\cdots,n-1,\textup{ and } \sum_{l=1}^{n-1}u_{l}\geq1$, there exists a quantum synchronizable $(a_{l},a_{r})-[[p^{n}+a_{l}+a_{r},p^{n}-2\sum_{i=1}^{n}u_{i}t_{i}]]_{2}$ code.
\end{theorem}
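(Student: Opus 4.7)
The plan is to apply Theorem~\ref{syn} to a carefully chosen pair of nested cyclic codes built from the irreducible factors of $d_{10}(x),\ldots,d_{n0}(x)$. By Lemma~\ref{cyc1}, each $d_{m0}(x)$ splits over $\mathbb{F}_{2}$ into $\frac{p^{m-1}(p-1)}{2t_{m}}$ irreducible polynomials of degree $t_{m}$. For each $m$ I would pick any $u_{m}$ of these irreducible factors (this is possible because of the upper bound assumed on $u_{m}$), call them $p_{m,1}(x),\ldots,p_{m,u_{m}}(x)$, and set
$$h(x)=\prod_{m=1}^{n}\prod_{j=1}^{u_{m}}p_{m,j}(x),\qquad g(x)=\prod_{m=1}^{n-1}\prod_{j=1}^{u_{m}}p_{m,j}(x).$$
Then $g(x)\mid h(x)$, so $C:=\langle h(x)\rangle\subseteq D:=\langle g(x)\rangle$, with $\deg h(x)=\sum_{m=1}^{n}u_{m}t_{m}$ and $\deg g(x)=\sum_{m=1}^{n-1}u_{m}t_{m}\geq 1$ (this is where $\sum_{l=1}^{n-1}u_{l}\geq 1$ is used, to guarantee that $D$ is a proper cyclic code distinct from the ambient ring and that $g(x)$ is a genuine generator polynomial in the sense of Theorem~\ref{syn}).

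The first substantive step is to check that $C$ is Euclidean dual-containing. The defining set $T$ of $C$ is a union of $2$-cyclotomic cosets modulo $p^{n}$, each of the form $p^{n-m}\cdot C_{i,p^{m}}$ with $i\in S_{m0}$; hence $T\subseteq \bigcup_{m=1}^{n}p^{n-m}S_{m0}$. Because $p\equiv -1\pmod 8$, in particular $p\equiv 3\pmod 4$, so $-1$ is a quadratic non-residue modulo $p^{m}$ for every $m$; this gives $-S_{m0}=S_{m1}$ and therefore
$$-T\subseteq \bigcup_{m=1}^{n}p^{n-m}S_{m1},$$
which is disjoint from $\bigcup_{m=1}^{n}p^{n-m}S_{m0}\supseteq T$. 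Consequently $T\cap(-T)=\emptyset$, i.e.\ $C^{\bot}\subseteq C$.

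Next I would compute $\textup{ord}(f(x))$ where $f(x)=h(x)/g(x)=\prod_{j=1}^{u_{n}}p_{n,j}(x)$. Each factor $p_{n,j}(x)$ is an irreducible factor of $d_{n0}(x)$, so its roots are primitive $p^{n}$-th roots of unity; Lemma~\ref{ord1} gives $\textup{ord}(p_{n,j}(x))=p^{n}$, and Lemma~\ref{ord2} then yields $\textup{ord}(f(x))=p^{n}$ (which is why the hypothesis $u_{n}\geq 1$ is imposed --- without a factor of $d_{n0}$ in the quotient, the order would drop below $p^{n}$). Thus the condition $a_{l}+a_{r}<p^{n}=\textup{ord}(f(x))$ required by Theorem~\ref{syn} is satisfied.

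Finally I would invoke Theorem~\ref{syn} with this pair $(C,D)$. Writing $k_{1}=p^{n}-\deg h(x)=p^{n}-\sum_{m=1}^{n}u_{m}t_{m}$ for the dimension of $C$, the theorem produces a quantum synchronizable $(a_{l},a_{r})$-$[[p^{n}+a_{l}+a_{r},\,2k_{1}-p^{n}]]_{2}$ code, and $2k_{1}-p^{n}=p^{n}-2\sum_{i=1}^{n}u_{i}t_{i}$ is precisely the claimed dimension. The main delicate point in the argument is the dual-containment verification in the second step: one has to be careful that mixing factors coming from several $d_{m0}(x)$ of different orders still keeps the defining set inside $\bigcup_{m}p^{n-m}S_{m0}$, so that the global sign symmetry $-S_{m0}=S_{m1}$ applies uniformly; everything else is routine manipulation of cyclotomic cosets and an application of Lemmas~\ref{ord1}--\ref{ord2}.
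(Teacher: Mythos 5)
Your argument is correct and is essentially the paper's own proof: both apply Theorem~\ref{syn} to a pair of nested cyclic codes generated by products of the irreducible factors of the $d_{m0}(x)$ supplied by Lemma~\ref{cyc1}, with the quotient $f(x)$ forced to contain at least one factor of $d_{n0}(x)$ so that Lemmas~\ref{ord1} and~\ref{ord2} give $\textup{ord}(f(x))=p^{n}$. The only differences are that the paper takes a slightly more general second generator $f_{2}(x)=\prod_{i}\prod_{j\leq v_{i}}h_{ij}(x)$ with $v_{n}<u_{n}$ (your $g(x)$ is the special case $v_{l}=u_{l}$ for $l<n$ and $v_{n}=0$, which satisfies the paper's constraints), and that you explicitly carry out the dual-containment verification $T\cap(-T)=\emptyset$ via $-S_{m0}=S_{m1}$, a step the paper leaves implicit.
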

\begin{proof}
By Lemma~\ref{cyc1}, $d_{m0}(x)=\prod_{j=1}^{\frac{p^{m-1}(p-1)}{2t_{m}}}h_{mj}(x)$, where $h_{mj}(x),\ j=1,\cdots,\frac{p^{n-1}(p-1)}{2t_{m}},\ m=1,2,\cdots,n$ are irreducible polynomials over $\mathbb{F}_{2}$.

Then let
$$f_{1}(x)=\prod_{i=1}^{n}\prod_{j=1}^{u_{i}}h_{ij}(x),\ 1\leq u_{n}\leq\frac{p^{n-1}(p-1)}{2t_{n}}, 0\leq u_{l}\leq\frac{p^{l-1}(p-1)}{2t_{l}}\textup{ for }l=1,2,\cdots,n-1,\textup{ and } \sum_{l=1}^{n-1}u_{l}\geq1,$$
$$f_{2}(x)=\prod_{i=1}^{n}\prod_{j=1}^{v_{i}}h_{ij}(x),\ v_{n}<u_{n}, v_{l}\leq u_{l}\textup{ for }l=1,2,\cdots,n-1,\textup{ and } \sum_{l=1}^{n}v_{l}\geq1,$$
and $D_{1}\ (D_{2})$ denote the cyclic code of length $p^{n}$ generated by the polynomial $f_{1}(x)$ ($f_{2}(x)$, respectively).

Then applying Theorem~\ref{syn}, for nonnegative integers $a_{l},\ a_{r}$ and $u_{m}\ (m=1,2,\cdots,n)$ such that $a_{l}+a_{r}<p^{n}$, $1\leq u_{n}\leq\frac{p^{n-1}(p-1)}{2t_{n}}, 0\leq u_{l}\leq\frac{p^{l-1}(p-1)}{2t_{l}}\textup{ for }l=1,2,\cdots,n-1,\textup{ and } \sum_{l=1}^{n-1}u_{l}\geq1$, we obtain a quantum synchronizable $(a_{l},a_{r})-[[p^{n},p^{n}-2\sum_{i=1}^{n}u_{i}t_{i}]]_{2}$ code.
\end{proof}
\begin{remark}
It is easy to see that Theorem~\ref{thm2} is a generalization of the result in \cite{XYF14}.
\end{remark}

\begin{example}
Let $p=31$, $n=2$. It can be verified that $g=3$ is a primitive root modulo $31^{2}$. Take
$$S_{20}=\{9^{i}\pmod{961}|i\in\mathbb{N}\},\ S_{10}=\{9^{i}\pmod{31}|i\in\mathbb{N}\}.$$
Then $|S_{20}|=465$ and $|S_{10}|=15$. We can get that $\textup{ord}_{961}(2)=155$ and $\textup{ord}_{31}(2)=5$, then $S_{20}$ ($S_{10}$) is the union of $3$ cyclotomic cosets as follows:
$$S_{20}=T_{21}\bigcup T_{22}\bigcup T_{23},\ S_{10}=T_{11}\bigcup T_{12}\bigcup T_{13},$$
where $T_{21}=\{1, 2, 4, 8, 16, 32, 33, 35,\cdots\}$, $T_{22}=\{5, 9, 10, 18, 20, 36, 40, 41,\cdots\}$, $T_{23}=\{7, 14, 19, 25, 28, 38, 45, 50,\cdots\}$,
$T_{11}=\{1, 2, 4, 8, 16\}$, $T_{12}=\{5, 9, 10, 18, 20\}$ and $T_{13}=\{7, 14, 19, 25, 28\}$.
Let $\omega$ be a primitive $961$-th root of unity over a field containing $\mathbb{F}_{2}$. Set
$$h_{2i}(x)=\prod_{j\in T_{2i}}(x-\omega^{j}),\ h_{1i}(x)=\prod_{j\in T_{1i}}(x-\omega^{31j}),\ i=1,2,3.$$
Then $h_{ij}(x)\in\mathbb{F}_{2}[x]$ for $i=1,2,\ j=1,2,3.$
Applying Theorem~\ref{thm2}, we obtain the quantum synchronizable $(a_{l},a_{r})-[[961+a_{l}+a_{r},k]]_{2}$ codes listed in table~\ref{Quantum}, where $a_{l}+a_{r}<961$.
\begin{table}[h]
\begin{center}
\caption{Quantum Synchronizable $(a_{l},a_{r})-[[961+a_{l}+a_{r},k]]_{2}$ Codes, where $a_{l}+a_{r}<961$}
\begin{tabular}{|c|c|c|c|}
\hline
$f_{1}(x)$  &  $f_{2}(x)$  &  $k_{1}$  & $k$  \\ \hline
$\Pi_{i=1}^{3}h_{1i}(x)\Pi_{j=1}^{3}h_{2j}(x)$& $\Pi_{i=1}^{3}h_{1i}(x)\Pi_{j=1}^{2}h_{2j}(x)$ &   481    &   1  \\ \hline
$\Pi_{i=1}^{3}h_{1i}(x)\Pi_{j=1}^{2}h_{2j}(x)$& $\Pi_{i=1}^{3}h_{1i}(x)h_{21}(x)$ &   636    &   311  \\ \hline
$\Pi_{i=1}^{2}h_{1i}(x)\Pi_{j=1}^{3}h_{2j}(x)$& $\Pi_{i=1}^{2}h_{1i}(x)\Pi_{j=1}^{2}h_{2j}(x)$ &   486    &   11    \\ \hline
$\Pi_{i=1}^{2}h_{1i}(x)\Pi_{j=1}^{2}h_{2j}(x)$& $\Pi_{i=1}^{2}h_{1i}(x)h_{21}(x)$ &   641    &   321    \\ \hline
$h_{11}(x)\Pi_{j=1}^{3}h_{2j}(x)$& $h_{11}(x)\Pi_{j=1}^{2}h_{2j}(x)$ &   491    &   21    \\ \hline
$\Pi_{i=1}^{3}h_{1i}(x)h_{21}(x)$& $\Pi_{i=1}^{3}h_{1i}(x)$  &   791    &   621    \\ \hline
$\Pi_{j=1}^{3}h_{2j}(x)$& $\Pi_{j=1}^{2}h_{2j}(x)$  &   496    &   31   \\ \hline
$h_{11}(x)\Pi_{j=1}^{2}h_{2j}(x)$& $h_{11}(x)h_{21}(x)$ &   646    &   331    \\ \hline
$\Pi_{i=1}^{2}h_{1i}(x)h_{21}(x)$& $\Pi_{i=1}^{2}h_{1i}(x)$ &   796    &   631     \\ \hline
$\Pi_{j=1}^{2}h_{2j}(x)$& $h_{21}(x)$ &   651    &   341     \\ \hline
$h_{11}(x)h_{21}(x)$& $h_{11}(x)$ &   801    &   641     \\ \hline
\end{tabular}
\label{Quantum}
\end{center}
\end{table}
\end{example}

\end{document}